\newtheorem{thrm}{Theorem}[section]
\newtheorem{crlr}[thrm]{Corollary}
\newtheorem{lmm}[thrm]{Lemma}
\theoremstyle{definition}
\newtheorem{rmrk}{Remark}
\DeclareMathOperator{\End}{End}
\DeclareMathOperator{\id}{id}
\newcommand{\rmi}{\mathrm{i}}
\newcommand{\rmd}{\mathrm{d}}
\newcommand{\smiled}[1]{\accentset{\smile}{#1}}
\newcommand{\sympman}{\mathcal{M}}
\newcommand{\bund}{\mathcal{E}}
\newcommand{\ebund}{\End(\bund)}
\newcommand{\wbund}{\mathcal{W}}
\newcommand{\connsymp}{\partial^{S}}
\newcommand{\connbund}{\partial^{\bund}}
\newcommand{\connend}{\partial^{\ebund}}
\newcommand{\gambund}{\Gamma^{\bund}}
\newcommand{\curvbund}{R^{\bund}}
\newcommand{\ghmtp}[4]{G^{#1 ,\, #2}_{#3 \, #4}}
\newcommand{\dotghmtp}[4]{\dot{G}^{#1 ,\, #2}_{#3 \, #4}}
\newcommand{\ttoper}[3]{{T_t}^{#1 ,\, #2}_{#3}}
\newcommand{\ttinvoper}[3]{{T^{-1}_t}^{#1 ,\, #2}_{#3}}
\newcommand{\toper}[3]{T^{#1 ,\, #2}_{#3}}
\newcommand{\tinvoper}[3]{{T^{-1}}^{#1 ,\, #2}_{#3}}
\newcommand{\uoper}[3]{{U}^{#1 ,\, #2}_{#3}}
\newcommand{\uinvoper}[3]{{U^{-1}}^{#1 ,\, #2}_{#3}}
\newcommand{\moper}[2]{M^{#1}_{#2}}
\newcommand{\frmbr}[1]{\langle #1 \rangle}
\begin{document}
\linespread{1.3}

\title{Seiberg-Witten equations from Fedosov \\ deformation quantization of endomorphism bundle}
\date{}
\author{Micha{\l} Dobrski
\footnote{michal.dobrski@p.lodz.pl}
\\
\small
\emph{Centre of Mathematics and Physics}
\\
\small
\emph{Technical University of {\L}\'od\'z,}
\\
\small
\emph{Al.~Politechniki 11, 90-924 {\L}\'od\'z, Poland}}
\maketitle
\abstract{It is shown how Seiberg-Witten equations can be obtained by means of Fedosov deformation quantization of endomorphism bundle and the corresponding theory of equivalences of star products. In such setting, Seiberg-Witten map can be iteratively computed for arbitrary gauge group up to any given degree with recursive methods of Fedosov construction. Presented approach can be also considered as a generalization of Seiberg-Witten equations to Fedosov type of noncommutativity.}

\section{Introduction}
Seiberg-Witten equations define correspondence between commutative and noncommutative gauge theories \cite{seibwitt}. The noncommutativity considered in this context is given by Moyal star product. Solutions of these equations were found, analyzed and used in noncommutative field theories by many authors (compare e.g. \cite{asakawa1, goto, jurco0, brace, kraus, moller,ohl,trampetic,ulker} and references therein). Original formulation of \cite{seibwitt} was based on the relation between two possible regularizations of the particular string theory model. However, it was shown that Seiberg-Witten maps can be consistently introduced without referring to the string theory, but only by algebraic analysis of postulated noncommutative gauge transformations \cite{jurco0}. Another possibility (which will be used in our approach) is given by the theory of equivalence of star products \cite{jurco1}. 

In this paper, we are going to show that Seiberg-Witten equations and their solutions can be understood as a local manifestation of the global theory -- deformation quantization of endomorphism bundle \cite{fedosov,fedosov0}. The main inspiration for such considerations comes from the work of Jur\v{c}o and Schupp showing that the star equivalence relations can yield Seiberg-Witten equations \cite{jurco1}. The approach of \cite{jurco1}, with Kontsevich theory as a main tool, seems to work without any problems only in the case of abelian gauge. (The nonabelian gauge was analyzed in \cite{jurco2}. Unfortunately the theory became significantly complicated and the central result was stated without strict proof in this case). The main aim of the present paper is to show that equivalence theory in Fedosov quantization yields results analogous to that of \cite{jurco1}, but for arbitrary gauge group. Moreover, Fedosov machinery contains iterative procedures which can be used for explicit computation of solutions of Seiberg-Witten equations up to an arbitrary order. Also, our approach gives some generalization of these equations to Fedosov type of noncommutativity.

The paper is organized as follows. First (section 2), we recall Fedosov formalism of deformation quantization of endomorphism bundle. Next (section 3), the theory of isomorphisms generated by Heisenberg equation is used to derive the central relation of this paper -- formula (\ref{triv_ebund_covar}). We use the same theory to introduce commuting derivations of Fedosov algebra (following \cite{dobrski} at this point). Then (section 4), the things are put together and it is shown how Seiberg-Witten equations (and their solutions) can be obtained from Fedosov equivalence theory. Some concluding remarks are given in section 5.
\section{Fedosov Construction of Deformation Quantization of Endomorphism Bundle}
Let us recall Fedosov construction for deformation quantization of endomorphism bundle. We strictly follow sections {5.2} and {5.3} of \cite{fedosov}. Instead of considering general formalism in which one deals with arbitrary symplectic bundle, we stick to the simplest case of symplectic manifold $(\sympman,\omega)$ which is the base manifold for considered bundles. Since this section is given mainly for the purpose of fixing notations, the proofs are omitted and the numbers of theorems originally formulated in \cite{fedosov} are quoted. For detailed
insight into geometrical ideas behind Fedosov construction (in the case of functions on symplectic manifold) one may refer to \cite{emmrwein}. Some further properties and examples can be found in \cite{tosiek,tosiek2}.

The starting point for the Fedosov construction is a Fedosov manifold $(\sympman,\omega,\connsymp)$ i.e. a symplectic manifold $(\sympman,\omega)$ with some symplectic (torsionless and preserving $\omega$) connection $\connsymp$ (compare \cite{gelfand,bielgutt} for details). Then one considers a complex vector bundle $\bund$ over $\sympman$ with a connection $\connbund$. Our main interest will be focused on the bundle $\ebund$ which fiber $\ebund_x$ is a vector space of endomorphisms of corresponding fiber $\bund_x$. Recall that the connection $\connbund$ induces a connection $\connend$. Indeed, if in the local frame $e$ of $\bund$ one has $\connbund a=\rmd a+\gambund a$  then  $\connend B=\rmd B+[\gambund,B]$, where $\gambund$ is a local, endomorphism-valued one-form. Sections of $\ebund$ can be multiplied with usual composition of linear mappings. Fedosov construction (usually considered as a deformation quantization of algebra of functions on a symplectic manifold) yields fully geometrical, formal deformation of product of endomorphisms. Locally this construction can be understood as a deformation of matrix product.

In the first step, one introduces the bundle $W \otimes \ebund$ on the base manifold $\sympman$, where $W$ is the formal Weyl algebra bundle of usual Fedosov construction (\cite{fedosov} section {5.1}). $W \otimes \ebund$ is called the formal Weyl algebra bundle with twisted coefficients and will be denoted by $\wbund$. Its fibers are algebras $\wbund_x$ consisting of formal mappings from $T_x\sympman$ to $\ebund_x[[h]]$ of the form
\begin{equation}
\label{fedo_fps}
a(y)=\sum_{k,p \geq 0} h^k a_{i_1 \dots i_p}y^{i_1} \dots y^{i_p},
\end{equation}
where $y \in T_x\sympman$, $a_{i_1 \dots i_p}$ are components of some symmetric, $\ebund_x$-valued covariant tensors in local coordinates and $h$ is a formal parameter. One prescribes degrees to monomials in formal sum (\ref{fedo_fps}) according to the rule
\begin{equation*}
\deg (h^k a_{i_1 \dots i_p}y^{i_1} \dots y^{i_p} )=2k+p.
\end{equation*}
For nonhomogeneous $a$ its degree is given by the lowest degree of nonzero monomials in formal sum (\ref{fedo_fps}). The operator $P_m$ extracts monomials of degree $m$ from given $a$
\begin{equation*}
P_m(a)(y)=\sum_{2k+p=m} h^k a_{i_1 \dots i_p}y^{i_1} \dots y^{i_p}.
\end{equation*}

The fiberwise $\circ$-product is defined by the Moyal formula
\begin{equation*}
a \circ b = \sum_{m=0}^{\infty}\left( -\frac{\rmi h}{2}\right)^m \frac{1}{m!} 
\frac{\partial^m a}{\partial y^{i_1} \dots \partial y^{i_m}}
\omega^{i_1 j_1} \dots \omega^{i_m j_m}
\frac{\partial^m b}{\partial y^{j_1} \dots \partial y^{j_m}}.
\end{equation*}
(We follow conventions of \cite{fedosov} with minus sign at the imaginary unit. The transformation to the most popular form of Moyal product can be easily obtained by putting $h \to -h$).	
This definition is invariant under linear transformations of $y^i$ generated by transitions between local coordinates on $\sympman$. Notice that $\circ$-product is expressed in terms of the product of endomorphisms which is noncommutative from the very beginning.

We also consider the bundle $\wbund \otimes \Lambda$. Sections of this bundle can be locally written as
\begin{equation*}
a=\sum h^k a_{i_1 \dots i_p j_1 \dots j_q}(x) y^{i_1} \dots y^{i_p} \rmd x^{j_1} \wedge \dots \wedge \rmd x^{j_q}.
\end{equation*}
The $\circ$-product in $\wbund \otimes \Lambda$ is defined by the rule $(a \otimes \eta) \circ (b \otimes \xi)=(a \circ b) \otimes (\eta \wedge \xi)$. The commutator of $a \in \wbund \otimes \Lambda^r$ and $b \in \wbund \otimes \Lambda^s$ is given by $[a,b]=a \circ b - (-1)^{rs} b \circ a$. It can be easily observed that the only elements of $\wbund \otimes \Lambda$ which vanish on all commutators are these which are proportional to the identity endomorphism (or equal to $0$) at each point and which contain no $y^i$. Thus, \emph{scalar forms} are just sections belonging to $C^{\infty}(\Lambda)[[h]]$. In particular, scalar $0$-forms can be identified with formal power series with coefficients being the functions on $\sympman$.

Notice that (comparing to the case of Fedosov construction for functions) one deals with one subtlety related to initial noncommutativity of endomorphism product. When operators of the form $K=\frac{\rmi}{h} [s,\cdot \,]$ are considered, one may not take arbitrary $s$, since in general negative powers of $h$ may be produced. The only allowed $s \in C^{\infty}(\wbund \otimes \Lambda)$ are these for which monomials of the form $h^0 a_{i_1 \dots i_p j_1 \dots j_q} y^{i_1} \dots y^{i_p} \rmd x^{j_1} \wedge \dots \wedge \rmd x^{j_q}$ are expressed by central endomorphisms only. Let us call them \emph{$\mathcal{C}$-sections}. Also, let us use the term  \emph{$\mathcal{C}$-operator} for mappings which transport  $\mathcal{C}$-sections to $\mathcal{C}$-sections.
\begin{lmm}
\label{fedo_kop_lmm}
For arbitrary $\mathcal{C}$-section $s$ the commutator $\frac{i}{h}[s,\cdot \,]$ is a $\mathcal{C}$-operator.
\end{lmm}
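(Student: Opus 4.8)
The plan is to work order by order in the formal parameter $h$, exploiting the fact that the only part of the Moyal $\circ$-product carrying no explicit power of $h$ is the $m=0$ summand, which is the ordinary pointwise product in $\ebund \otimes \Lambda$. Write the $h$-expansions $s=\sum_k h^k s_k$ and $b=\sum_k h^k b_k$ of a $\mathcal{C}$-section $s$ and of an arbitrary $\mathcal{C}$-section $b$; by bilinearity we may take both homogeneous in form degree, say of degrees $r$ and $r'$. The hypothesis says precisely that the $\ebund$-valued coefficients of $s_0$ and $b_0$ are central, i.e.\ proportional to the identity endomorphism, while $s_k,b_k$ for $k\ge1$ are unconstrained. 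Two things must be checked: that $\frac{\rmi}{h}[s,\cdot\,]$ produces no negative powers of $h$, and that it sends $\mathcal{C}$-sections to $\mathcal{C}$-sections.

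For well-definedness I would isolate the $h^0$-part of $[s,b]$. Since every $m\ge1$ term of $\circ$ carries an explicit $h^m$, only the $m=0$ (pointwise) product contributes, yielding the graded commutator $s_0\cdot b_0-(-1)^{rr'}b_0\cdot s_0$ in $\ebund \otimes \Lambda$. The coefficients of $s_0$ are central, hence commute through the endomorphism product; the formal variables $y^i$ commute; and the wedge of forms supplies exactly the sign $(-1)^{rr'}$. These three facts make the graded commutator vanish, so the $h^0$-part of $[s,b]$ is zero and $\frac{\rmi}{h}[s,b]$ has no negative powers of $h$.

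It then remains to prove that the $h^0$-part of $\frac{\rmi}{h}[s,b]$, equivalently $\rmi$ times the $h^1$-part of $[s,b]$, has central coefficients. Only the $m=0$ and $m=1$ summands can feed the $h^1$ level. The $m=0$ contribution is the graded commutator $[s_1,b_0]+[s_0,b_1]$; here the first bracket vanishes because $b_0$ is central and the second because $s_0$ is central, so this piece drops out entirely. The $m=1$ contribution, evaluated on the $h^0$-parts, is
\begin{equation*}
-\frac{\rmi}{2}\left(\frac{\partial s_0}{\partial y^i}\,\omega^{ij}\,\frac{\partial b_0}{\partial y^j}-(-1)^{rr'}\frac{\partial b_0}{\partial y^i}\,\omega^{ij}\,\frac{\partial s_0}{\partial y^j}\right).
\end{equation*}
Since $\partial/\partial y$ differentiates only the $y$-monomials and leaves the coefficients untouched, each derivative still has central coefficients, and products of central endomorphisms are again central; hence this term is central. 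Therefore the whole $h^1$-part of $[s,b]$ is central, $\frac{\rmi}{h}[s,b]$ is a $\mathcal{C}$-section, and $K=\frac{\rmi}{h}[s,\cdot\,]$ is a $\mathcal{C}$-operator.

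The step I would treat most carefully is the sign-and-grading bookkeeping in the graded commutator, together with the observation that centrality at order $h^0$ is needed for \emph{both} arguments: it is the centrality of $b_0$, not only of $s_0$, that annihilates the surviving $m=0$ piece at order $h^1$, and the fact that $\partial/\partial y$ preserves centrality of coefficients is what forces the genuinely quantum $m=1$ term back into the central endomorphisms. Everything else is routine counting of powers of $h$.
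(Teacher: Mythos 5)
Your argument is correct and is exactly the order-by-order power counting that the paper has in mind: the paper omits the proof entirely (``The proof is straightforward''), and your two checks --- vanishing of the $h^0$-part of the graded $m=0$ commutator from centrality of $s_0$, and centrality of the $h^1$-part coming from the $m=0$ cross terms (using centrality of \emph{both} $s_0$ and $b_0$) together with the $m=1$ term (using that $\partial/\partial y$ and products preserve central coefficients) --- are precisely the intended content. Nothing is missing.
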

The proof is straightforward.

One introduces an operator $\delta$ acting on elements of $\wbund \otimes \Lambda$ by the relation
\begin{equation*}
\delta a=\rmd x^k \wedge \frac{\partial a}{\partial y^k}=-\frac{\rmi}{h}[\omega_{ij} y^i \rmd x^j,a].
\end{equation*}
Similarly, $\delta^{-1}$ acting on monomial $a_{km}$ with $k$-fold $y$ and $m$-fold $\rmd x$ yields
\begin{equation*}
\delta^{-1}a_{km}=\frac{1}{k+m}y^s \iota \left(\frac{\partial}{\partial x^s}\right) a_{km}
\end{equation*}
for $k+m>0$ and $\delta^{-1}a_{00}=0$. Both $\delta$ and $\delta^{-1}$ are nilpotent and for $a \in \wbund \otimes \Lambda^k$ the Leibniz rule $\delta(a \circ b)= (\delta a) \circ b + (-1)^k a \circ \delta b$ holds. An arbitrary  $a \in \wbund \otimes \Lambda$ can be decomposed into
\begin{equation*}
a=a_{00}+\delta \delta^{-1}a + \delta^{-1}\delta a.
\end{equation*}

Symplectic connection $\connsymp$ induces the connection in formal Weyl algebras bundle $W$ which will be denoted by the same symbol. In the case of twisted coefficients (bundle $\wbund$) one deals with connection $\partial=\connsymp \otimes 1 + 1 \otimes \connend$. Let $a$ be a section of $\wbund$.
Using Darboux coordinates on $\sympman$ and symplectic connection coefficients $\tensor{\Gamma}{^l_{jk}}$ one may write $\partial a$ (for some fixed frame in $\bund$) in the form
\begin{equation*}
\partial a = \rmd a + \frac{\rmi}{h}[1/2\, \Gamma_{ijk}y^i y^j \rmd x^k,a]+[\gambund , a],
\end{equation*}
with $\Gamma_{ijk}=\omega_{il} \tensor{\Gamma}{^l_{jl}}$ (any further raising or lowering of indices is also performed by means of $\omega$).  When dealing with sections of $\wbund \otimes \Lambda$, we can compute $\partial$ using the rule
\begin{equation*}
\partial (\eta \circ a) = \rmd \eta \circ a + (-1)^k \eta \circ \partial a,
\end{equation*}
for $\eta$ being a scalar $k$-form. The $\circ$-Leibniz rule holds for $\partial$ and one could be interested in other connections with this property, namely in the connections of the form
\begin{equation*}
\nabla=\partial + \frac{\rmi}{h}[\gamma,\cdot\,],
\end{equation*}
with $\gamma \in C^{\infty}(\wbund \otimes \Lambda^1)$. One can calculate that $\nabla^2 = \frac{\rmi}{h}[\Omega, \cdot \,]$ with the curvature $2$-form $\Omega=R+\partial \gamma+\frac{\rmi}{h}\gamma \circ \gamma$ where $R=1/4\, R_{ijkl}y^i y^j \rmd x^k \wedge \rmd x^l- \frac{\rmi h}{2}\ \curvbund_{kl} \rmd x^k \wedge \rmd x^l$. Here $\tensor{R}{^i_{jkl}}$ denotes the curvature tensor of symplectic connection and $\curvbund_{kl}={\partial \gambund_l} / {\partial x_k}-{\partial \gambund_k} / {\partial x_l}+[\gambund_k,\gambund_l]$.
The connection $D$ is called Abelian if it is flat ($D^2=0$) i.e.\ if its curvature is a scalar form. The following theorem holds.
\begin{thrm}[Fedosov {5.3.3}]
\label{fedo_abel}
For arbitrary connection $\partial$ and arbitrary $\mathcal{C}$-section $\mu \in C^{\infty}(\wbund)$ with $\deg \mu \geq 3$ and $\mu|_{y=0}=0$ there exists unique Abelian connection
\begin{equation*}
D=-\delta+\partial+\frac{\rmi}{h}[r,\cdot\,],
\end{equation*}
satisfying the conditions: 
\begin{itemize}
\item $\Omega=-1/2\, \omega_{ij}\rmd x^i \wedge \rmd x^j$  is the corresponding curvature $2$-form,
\item $1$-form $r$ is $\mathcal{C}$-section,
\item $\delta^{-1}r=\mu$, 
\item $\deg r \geq 2$.
\end{itemize}
The $1$-form $r$ is the unique solution of the equation 
\begin{equation}
\label{fedo_abeliter}
r=r_0 + \delta^{-1}\left( \partial r + \frac{\rmi}{h}r \circ r\right).
\end{equation}
with $r_0=\delta^{-1}R + \delta \mu$.
\end{thrm}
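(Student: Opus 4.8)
The plan is to reduce the existence of the Abelian connection to a fixed-point problem that is solved by recursion on the $\deg$-grading, and to monitor the $\mathcal{C}$-section property throughout. First I would absorb $-\delta$ into the inner derivation: since $\delta a=-\frac{\rmi}{h}[\omega_{ij}y^i\rmd x^j,a]$, the connection $D$ takes the form $\partial+\frac{\rmi}{h}[\gamma,\cdot\,]$ with $\gamma=\omega_{ij}y^i\rmd x^j+r$, so its curvature can be read off from the formula $\Omega=R+\partial\gamma+\frac{\rmi}{h}\gamma\circ\gamma$ stated above. The purely scalar contribution of $\omega_{ij}y^i\rmd x^j$ reproduces $-\frac12\omega_{ij}\rmd x^i\wedge\rmd x^j$, the mixed terms collapse to $-\delta r$ (because $\frac{\rmi}{h}[\omega_{ij}y^i\rmd x^j,r]=-\delta r$ for a one-form $r$), and one is left with $\Omega=-\frac12\omega_{ij}\rmd x^i\wedge\rmd x^j+\big(R-\delta r+\partial r+\frac{\rmi}{h}r\circ r\big)$. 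Hence the prescribed value of $\Omega$ is equivalent to the Weyl-valued equation $\delta r=R+\partial r+\frac{\rmi}{h}r\circ r$.

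I would then convert this into the integral equation (\ref{fedo_abeliter}). Using the decomposition $r=r_{00}+\delta\delta^{-1}r+\delta^{-1}\delta r$ and observing that the one-form $r$ has $r_{00}=0$, the normalization $\delta^{-1}r=\mu$ gives $r=\delta\mu+\delta^{-1}\delta r$; substituting $\delta r$ from the previous equation and writing $r_0=\delta^{-1}R+\delta\mu$ produces exactly $r=r_0+\delta^{-1}\big(\partial r+\frac{\rmi}{h}r\circ r\big)$, and a brief degree count gives $\deg r_0\geq 2$. Existence and uniqueness of a solution with $\deg r\geq 2$ then follow by recursion. The key bookkeeping is that $\delta^{-1}$ raises $\deg$ by one while $\partial$ preserves it, and that for the quadratic term one may write $r\circ r=\frac12[r,r]$; since $r$ is a $\mathcal{C}$-section its $h^0$ part is central, the would-be singular $h^{-1}$ contribution of $\frac{\rmi}{h}[r,r]$ cancels, and, the $\circ$-product being homogeneous for $\deg$, the degree-$m$ part of $\delta^{-1}\frac{\rmi}{h}r\circ r$ involves only homogeneous components $r^{(a)},r^{(b)}$ with $a,b\geq 2$ and $a+b=m+1$, hence with $a,b\leq m-1$. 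Thus each component $r^{(m)}$ is uniquely fixed by $P_m r_0$ and the lower components $r^{(2)},\dots,r^{(m-1)}$, starting from $r^{(2)}=\delta P_3\mu$. Because $\mu$ is a $\mathcal{C}$-section and $\delta^{-1}$, $\partial$ and $\frac{\rmi}{h}[\cdot,\cdot]$ (the last by Lemma \ref{fedo_kop_lmm}) preserve this property, induction shows every $r^{(m)}$, and therefore $r$, is a $\mathcal{C}$-section.

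The step I expect to be the real obstacle is verifying that the solution of the $\delta^{-1}$-projected equation (\ref{fedo_abeliter}) genuinely solves the full equation $\delta r=R+\partial r+\frac{\rmi}{h}r\circ r$, i.e.\ that $\Omega$ is actually scalar. I would set $A=R-\delta r+\partial r+\frac{\rmi}{h}r\circ r$ and first note that (\ref{fedo_abeliter}) forces $\delta^{-1}A=0$, so, $A$ being a two-form with $A_{00}=0$, one has $A=\delta^{-1}\delta A$. The second Bianchi identity $D\Omega=0$, together with $\partial R=0$, $\delta^2=0$, $\delta\partial+\partial\delta=0$ and $\partial^2=\frac{\rmi}{h}[R,\cdot\,]$, yields $\delta A=\partial A+\frac{\rmi}{h}[r,A]$, whence $A=\delta^{-1}\big(\partial A+\frac{\rmi}{h}[r,A]\big)$. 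Since the right-hand side strictly raises $\deg$ (again because $\delta^{-1}$ raises it by one and $r$ has degree $\geq 2$), comparing lowest homogeneous components forces $A=0$. This closes existence, while uniqueness follows because any admissible $r$ must satisfy (\ref{fedo_abeliter}), whose solution is unique by the recursion.
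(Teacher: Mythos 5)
Your proposal is correct and follows essentially the same route as the paper's (omitted, deferred to Fedosov 5.3.3) proof: reduce the prescribed-curvature condition to the fixed-point equation (\ref{fedo_abeliter}), solve it by degree recursion using that $\id + \delta^{-1}\big( \partial + \frac{\rmi}{h}(\cdot)^{2_{\circ}}\big)$ raises degree and is a $\mathcal{C}$-operator with $\mathcal{C}$-section initial point $r_0$, and close existence via the Bianchi-identity argument that $A=\delta^{-1}\big(\partial A + \frac{\rmi}{h}[r,A]\big)$ forces $A=0$. You correctly supply the one step the paper's remark does not even mention (the verification that the $\delta^{-1}$-projected solution satisfies the full curvature equation), and I see no gaps.
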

Omitting the proof we remark that the solution of (\ref{fedo_abeliter}) comes from recursive application of $\id + \delta^{-1}\left( \partial  + \frac{\rmi}{h} (\cdot)^{2_{\circ}}  \right)$ which for $1$-forms is an $\mathcal{C}$-operator (here $(a)^{2_{\circ}}=a \circ a$). The initial point for this recurrence is given by the $\mathcal{C}$-section $r_0$. By lemma \ref{fedo_kop_lmm} we infer that $D$ is a $\mathcal{C}$-operator.

Section $a \in C^{\infty}(\wbund)$ is called flat if $Da=0$. Flat sections form subalgebra of the algebra of all sections of $\wbund$. We denote this subalgebra by $\wbund_D$. For $a \in C^{\infty}(W \otimes \Lambda)$ define $Q(a)$ as a solution of the equation
\begin{equation}
\label{fedo_q_iter}
b=a + \delta^{-1}(D+\delta)b
\end{equation}
with respect to $b$. One can prove that this solution is unique, and that $Q$ is a linear bijection. Then, $Q^{-1}a=a-\delta^{-1}(D+\delta)a$. The following theorem enables construction of star product of endomorphisms.
\begin{thrm}[Fedosov {5.2.4}]
\label{fedo_qbiject}
The mapping $Q$ establishes bijection between sections from $C^{\infty}(\ebund)[[h]]$ and $\wbund_D$.
\end{thrm}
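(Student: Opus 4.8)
The plan is to exhibit the \emph{symbol map} $\sigma\colon C^{\infty}(\wbund)\to C^{\infty}(\ebund)[[h]]$, $\sigma(a)=a|_{y=0}$, as a two-sided inverse of $Q$ between the two spaces in question. The engine for every step is a single grading observation: in the Fedosov degree $\deg(h^k a\, y^{i_1}\cdots y^{i_p})=2k+p$ the operator $\delta^{-1}$ raises degree by exactly one, while $D+\delta=\partial+\frac{\rmi}{h}[r,\cdot\,]$ never lowers it, because $\partial$ preserves degree and $\frac{\rmi}{h}[r,\cdot\,]$ raises it by $\deg r-2\geq 0$. Hence $\delta^{-1}(D+\delta)$ strictly raises degree, so any equation of the form $x=x_{0}+\delta^{-1}(D+\delta)x$ is solved uniquely by degree-by-degree recursion; this is the same mechanism that defines $Q$ in (\ref{fedo_q_iter}), and I would invoke it three more times.

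First I would check that $Q$ preserves form degree and that $\sigma\circ Q=\id$. For $a\in C^{\infty}(\ebund)[[h]]$ every term on the right of (\ref{fedo_q_iter}) other than $a$ carries a factor $\delta^{-1}(\cdots)$, which both contains a $y$ and lowers form degree by one; thus $b=Q(a)$ is again a $0$-form and $b|_{y=0}=a$. In particular $\sigma$ is a left inverse, so $Q$ is injective on $C^{\infty}(\ebund)[[h]]$ (this also follows from the bijectivity of $Q$ recalled before the theorem).

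The crux is to show $Q(a)$ is flat. Writing $b=Q(a)$, applying $\delta$ to (\ref{fedo_q_iter}) and using $\delta a=0$ gives $\delta b=\delta\delta^{-1}(D+\delta)b$. Substituting this into the decomposition $(D+\delta)b=((D+\delta)b)_{00}+\delta\delta^{-1}(D+\delta)b+\delta^{-1}\delta(D+\delta)b$, whose first term vanishes since $(D+\delta)b$ is a $1$-form, I get $Db=(D+\delta)b-\delta b=\delta^{-1}\delta(D+\delta)b$. Therefore $c:=Db$ obeys $\delta^{-1}c=0$ by nilpotency of $\delta^{-1}$, hence $c=\delta^{-1}\delta c$. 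On the other hand $Dc=D^{2}b=0$ because $D$ is Abelian, so $\delta c=(D+\delta)c$ and consequently $c=\delta^{-1}(D+\delta)c$; by the degree argument the only solution is $c=0$, i.e.\ $Db=0$ and $Q(a)\in\wbund_{D}$.

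Finally, for surjectivity I would reverse the computation on a flat section $b\in\wbund_{D}$. Since $b$ is a $0$-form, $\delta^{-1}b=0$ and the decomposition collapses to $b=\sigma(b)+\delta^{-1}\delta b$; flatness gives $\delta b=(D+\delta)b$, so $b=\sigma(b)+\delta^{-1}(D+\delta)b$. This is precisely the defining equation (\ref{fedo_q_iter}) for $Q(\sigma(b))$, whose solution is unique, whence $b=Q(\sigma(b))$. Together with the first step this shows $Q$ and $\sigma$ are mutually inverse between $C^{\infty}(\ebund)[[h]]$ and $\wbund_{D}$, which is the claim. I expect the flatness step to be the main obstacle, as it is the only place where both the Abelian property $D^{2}=0$ and the precise interplay of $\delta$ and $\delta^{-1}$ are needed; the remaining steps reduce to the uniqueness-by-degree argument already used to define $Q$.
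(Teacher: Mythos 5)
Your proof is correct and is precisely the standard argument for Fedosov's Theorem 5.2.4 (the paper itself omits the proof and only cites \cite{fedosov}): one checks $\sigma\circ Q=\id$ by the degree recursion, proves flatness of $Q(a)$ by showing $c=Db$ satisfies $\delta^{-1}c=0$ and $Dc=0$ and hence solves the homogeneous equation $c=\delta^{-1}(D+\delta)c$, and gets surjectivity by observing that a flat $0$-form already satisfies the defining equation of $Q(\sigma(b))$. All steps, including the key points that $\delta^{-1}(D+\delta)$ strictly raises the Fedosov degree and that $D^{2}=\frac{\rmi}{h}[\Omega,\cdot\,]=0$ for scalar $\Omega$, are sound.
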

Notice that for $a \in \wbund_D$ one obtains $Q^{-1}(a)=a_0$, where $a_0=a|_{y=0}$. We may also infer that $Q$ is an $\mathcal{C}$-operator. This comes from the fact that $Q$ is expressed in terms of $\mathcal{C}$-operator $\delta^{-1}(D+\delta)$. For $A,B \in C^{\infty}(\ebund)[[h]]$ the star product is defined according to the rule
\begin{equation*}
A*B=Q^{-1}(Q(A) \circ Q(B)).
\end{equation*}
The star product obtained in this way is a globally defined associative deformation of the usual product in $C^{\infty}(\ebund)$.
\begin{rmrk}
In proofs of theorems \ref{fedo_abel} and \ref{fedo_qbiject} the \emph{iteration method} is used. Given equation of the form 
\begin{equation}
\label{fedo_itermet}
a=b +K(a)
\end{equation} 
one may try to solve it iteratively with respect to $a$, by putting $a^{(0)}=b$ and $a^{(n)}=b+K(a^{(n-1)})$. If $K$ is linear and \emph{raises degrees} (i.e.\ $\deg a < \deg K(a)$ or $K(a)=0$) then it can be easily deduced that the unique solution of (\ref{fedo_itermet}) is given by the series of relations $P_m(a)=P_m(a^{(m)})$. In the case of nonlinear $K$ (as in (\ref{fedo_abeliter})) the more careful analysis is required \cite{fedosov}.
\end{rmrk}
\begin{rmrk}
All structures defined above are covariant with respect to both frame and coordinate transformations. Let us consider the first covariance. Suppose we have some section $a$ of $\wbund$. Locally, in frame $e$, $a$ is represented by the matrix $a_{\frmbr{e}}$ with entries in $W \otimes \Lambda$. (We are going to adopt the following convention. For matrices and matrix-valued operators which are frame-dependent representations of global objects we put subscript ${}_{\frmbr{e}}$ to mark the corresponding frame. However, we will omit this marking if the frame is evident within the given context). When we perform transformation $\widetilde{e} = eg^{-1}$ this matrix transforms according to $a_{\frmbr{\widetilde{e}}} = g a_{\frmbr{e}} g^{-1}=g \circ a_{\frmbr{e}} \circ g^{-1}$. One easily checks that this relation holds for $r$. Hence, if
\begin{equation*}
D_{\frmbr{e}}=-\delta+\partial+\frac{\rmi}{h}[r_{\frmbr{e}},\cdot\,]
\end{equation*}
and
\begin{equation*}
D_{\frmbr{\widetilde{e}}}=-\delta+\partial+\frac{\rmi}{h}[r_{\frmbr{\widetilde{e}}},\cdot\,]
\end{equation*}
then
\begin{equation*}
D_{\frmbr{\widetilde{e}}}(g \circ a_{\frmbr{e}} \circ g^{-1})=
g \circ D_{\frmbr{e}}(a_{\frmbr{e}})) \circ g^{-1}
\end{equation*}
and in consequence
\begin{equation*}
Q_{\frmbr{\widetilde{e}}}(g \circ a_{\frmbr{e}} \circ g^{-1})=
g \circ Q_{\frmbr{e}}(a_{\frmbr{e}})) \circ g^{-1}.
\end{equation*}
\end{rmrk}
\begin{rmrk}
One recovers usual Fedosov construction of star product of functions on $\sympman$ by considering bundle $\bund_0=\sympman \times \mathbb{C}$ with global frame $e_1=1$ and global connection $1$-form $\Gamma^{\bund_0} \equiv 0$.
\end{rmrk}
\begin{rmrk}
Let both connections $\connsymp$ and $\connbund$ be flat and $\mu \equiv 0$. Suppose we fix Darboux coordinates and frame in $\bund$ with $\Gamma_{ijk} \equiv 0$ and $\gambund \equiv 0$. Then Abelian connection reads $D_T=\rmd-\delta$. The corresponding subalgebra of flat sections is called \emph{trivial algebra} and will be denoted by $\wbund_{D_T}$. The star product obtained in this case is the Moyal one.
\end{rmrk}
\begin{rmrk}
\label{fedo_rmrk_flatbund}
Also, let us consider the case of arbitrary $\connsymp$, flat $\connbund$ and $\mu \equiv 0$. Then iteration for $r$ starts with section of $\wbund$ containing only central endomorphisms. Applying $\id + \delta^{-1}\left( \partial  + \frac{\rmi}{h} (\cdot)^{2_{\circ}}  \right)$ we do not produce any noncentral endomorphisms. By inductive use of this argument we may infer that the section $r$ obtained as the solution of (\ref{fedo_abeliter}) is exactly identical to the $r$ obtained by the Fedosov construction of deformation quantization for functions on $\sympman$. Thus locally, in a frame $e$ with $\gambund \equiv 0$, we observe that $b_{\frmbr{e}}=Q_{\frmbr{e}}(a_{\frmbr{e}})$ is a matrix valued section with entries $b_{\frmbr{e}\,ij}=Q_S(a_{\frmbr{e}\,ij})$ where $Q_S$ is the bijective mapping which generate Fedosov star product of functions. Hence, in such a frame, the resulting star product of endomorphisms takes form of the usual matrix product with commutative multiplication of entries replaced by the Fedosov star product of functions on $\sympman$. We denote the corresponding Abelian connection by $D_S=\rmd+\frac{\rmi}{h}[\gamma_S,\cdot \,]$, the algebra of flat sections by $\wbund_{D_S}$ and the star product by $*_S$.
\end{rmrk}
\begin{rmrk}
The bundle $\wbund$ can be generalized in the following way. Let $\wbund^{+}$ denote the bundle with fiber $\wbund^{+}_x$ given by elements of the form
\begin{equation*}
a(y)=\sum_{2k+p \geq 0} h^k a_{i_1 \dots i_p}y^{i_1} \dots y^{i^p},
\end{equation*}
where the number of monomials with given degree $2k+p$ is finite. In other words, we admit negative powers of $h$ as long as they are compensated by positive powers of $y$ and $P_m(a)$ contains finite number of monomials for arbitrary $m \geq 0$. The $\circ$-product, connections (including $D$) and operator $Q$ extends in a natural way to $\wbund^{+}$ and $\wbund^{+} \otimes \Lambda$.
\end{rmrk}	
We need one more theorem for further purposes.
\begin{thrm}[Fedosov {5.2.6}]
\label{triv_deqlemma}
Equation $Da=b$ (for some given $b \in C^{\infty} (\wbund \otimes \Lambda^p)$, $p>0$) has a solution if and only if $Db=0$. The solution may be chosen in the form $a=-Q \delta^{-1}b$.
\end{thrm}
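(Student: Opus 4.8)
The plan is to treat the two implications separately. The forward (necessity) direction is immediate from abelianness: if $Da=b$ for some $a$, then $Db=D^2a=0$, so I dispose of it in a single line. All the content lies in exhibiting the explicit solution, so I concentrate on showing that $a=-Q\delta^{-1}b$ satisfies $Da=b$ whenever $Db=0$ and $p>0$.

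First I would fix notation: writing $N:=\partial+\frac{\rmi}{h}[r,\cdot\,]$, one has $D=N-\delta$ and, from the definition of $Q$, both $Q^{-1}=\id-\delta^{-1}N$ and the fixed-point relation $Q=\id+\delta^{-1}NQ$. I will use exactly two structural inputs. The first is the Hodge-type decomposition $\id=\sigma+\delta\delta^{-1}+\delta^{-1}\delta$, where $\sigma$ denotes the projection $a\mapsto a_{00}$; the crucial feature is that $\sigma$ annihilates every form of positive degree. The second is the flatness constraint: expanding $D^2=0$ with $\delta^2=0$ gives
\begin{equation*}
N\delta+\delta N=N^2 .
\end{equation*}

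The computation I would then carry out runs as follows. Combining $D=N-\delta$ with $\delta Q=\delta+\delta\delta^{-1}NQ$ and the Hodge identity yields the expansion $DQ=-\delta+\sigma NQ+\delta^{-1}\delta NQ$. Applying this to $\delta^{-1}b$, the middle term drops because $NQ\delta^{-1}b$ has form degree $p>0$, while $-\delta\delta^{-1}b=-b+\delta^{-1}\delta b$ since $\sigma b=0$. Setting $v:=b+NQ\delta^{-1}b$, this gives $DQ\delta^{-1}b=-b+\delta^{-1}\delta v$, so the target $Da=b$ reduces to proving that $w:=\delta^{-1}\delta v$ vanishes.

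The key step, and the one I expect to be the only real obstacle, is closing this off by means of flatness. I would compute $\delta v$ directly: the hypothesis $Db=0$ gives $\delta b=Nb$, while the flatness identity turns $\delta\bigl(NQ\delta^{-1}b\bigr)=(N^2-N\delta)Q\delta^{-1}b$ into $N\bigl(NQ\delta^{-1}b-\delta Q\delta^{-1}b\bigr)=N\,DQ\delta^{-1}b$. Collecting terms collapses everything to $\delta v=N\bigl(b+DQ\delta^{-1}b\bigr)$, and substituting $b+DQ\delta^{-1}b=\delta^{-1}\delta v=w$ from the previous step produces the self-referential relation $\delta v=Nw$. Applying $\delta^{-1}$ then gives $w=\delta^{-1}\delta v=\delta^{-1}Nw$, that is $Q^{-1}w=0$; since $Q^{-1}$ is a bijection, $w=0$, whence $DQ\delta^{-1}b=-b$ and $a=-Q\delta^{-1}b$ solves $Da=b$. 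The delicate points to watch are the bookkeeping of form degrees, needed to discard the $\sigma$-terms, and the recognition that the very quantity being computed reappears inside $\delta v$; it is precisely the flatness identity $N\delta+\delta N=N^2$ that makes this loop close, and the injectivity of $Q^{-1}$ that converts the resulting fixed-point equation into $w=0$.
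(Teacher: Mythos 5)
Your proof is correct, and every step checks out: the identity $N\delta+\delta N=N^2$ does follow from $D^2=0$ (the Weyl curvature $\Omega$ is central, so $D$ squares to zero as an operator), the form-degree bookkeeping that kills the $\sigma$-terms is right ($N$ raises form degree by one and $Q$ preserves it, so $NQ\delta^{-1}b$ has positive form degree whenever $p>0$), and the final appeal to injectivity of $Q^{-1}=\id-\delta^{-1}N$ legitimately closes the loop. Note, however, that the paper itself gives no proof of this statement -- it is quoted verbatim as Theorem 5.2.6 of Fedosov's book -- so the comparison is really with Fedosov's argument. That argument is the same in substance but organized more economically: one sets $c:=Da-b$ with $a=-Q\delta^{-1}b$, observes $Dc=0$, $\sigma c=0$ (positive form degree) and $\delta^{-1}c=0$ (because $a$ lies in the image of the nilpotent $\delta^{-1}$ and $Q^{-1}a=-\delta^{-1}b$), whence the Hodge decomposition gives $c=\delta^{-1}\delta c=\delta^{-1}Nc$, i.e.\ $Q^{-1}c=0$ and $c=0$. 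Your $w$ is exactly $-c$, and your relation $w=\delta^{-1}Nw$ is the same fixed-point equation; the only difference is that you reach it by expanding $DQ$ explicitly rather than by testing $c$ against $\sigma$, $\delta^{-1}$ and $D$ separately. Either route is fine; the streamlined version saves the intermediate quantity $v$ and makes it clearer that the mechanism is the uniqueness of flat sections with vanishing $\delta^{-1}$- and $\sigma$-parts.
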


\section{Equivalences of Star Products Generated by Heisenberg Equation}
In this section general methods developed by Fedosov (\cite{fedosov} section 5.4) are applied to the specific case of deformation quantization of endomorphism bundle. We do not make use of full theory constructed by Fedosov. Instead we extract only those parts which are required for trivialization procedures. 

The term \emph{trivialization} will denote the procedure of establishing isomorphism between some given algebra $\wbund_D$ and $\wbund_{D_T}$ or $\wbund_{D_S}$. The construction of such isomorphisms is based on the following theorem. 
\begin{thrm}[Fedosov 5.4.3]
\label{triv_trhm_liouv}
Let 
\begin{eqnarray*}
D_t &=& -\delta + \partial^{(t)} + \frac{\rmi}{h}[r(t),\cdot \:]
\stackrel{\mathrm{locally}}{=}
-\delta + \rmd  + \frac{\rmi}{h}[1/2 \Gamma_{ijk}(t) y^{i}y^{j}\rmd x^k -\rmi h \gambund(t) + r(t),\cdot \:]\\
&=&
\rmd + \frac{\rmi}{h}[\gamma (t),\cdot \:]
\end{eqnarray*}
be a family of Abelian connections parameterized by $t \in [0,1]$, and let $H(t)$ be a $t$-dependent $\mathcal{C}$-section of $\wbund$ (called Hamiltonian) satisfying the following conditions:
\begin{enumerate}
\item $D_t H(t) - \dot{\gamma}(t)$ is a scalar form,
\item $\mathrm{deg}(H(t)) \geq 3$.
\end{enumerate}
Then, the equation
\begin{equation}
\label{triv_heis}
\frac{\rmd a}{\rmd t}+\frac{\rmi}{h}[H(t),a]=0
\end{equation}
has the unique solution $a(t)$ for any given $a(0) \in \wbund \otimes \Lambda$ and the mapping $a(0) \mapsto a(t)$ is bijective for any $t \in [0,1]$. Moreover, $a(0) \in \wbund_{D_0}$ if and only if $a(t) \in \wbund_{D_t}$.
\end{thrm}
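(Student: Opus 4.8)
The plan is to reduce everything to one structural fact: the operator $K(t):=\frac{\rmi}{h}[H(t),\cdot\,]$ \emph{strictly raises degree}. Since $H(t)$ is a $\mathcal{C}$-section, Lemma \ref{fedo_kop_lmm} guarantees that $K(t)$ is a well-defined $\mathcal{C}$-operator. Because the $\circ$-product is homogeneous with respect to the grading $2k+p$, the commutator $[H(t),a]$ has degree at least $\deg H(t)+\deg a$; dividing by $h$ lowers the degree by two, so condition 2 ($\deg H(t)\geq 3$) gives $\deg K(t)a\geq\deg a+1$ whenever $K(t)a\neq 0$. This degree-raising property is what makes the order-by-order solution of (\ref{triv_heis}) possible.

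First I would establish existence and uniqueness by solving (\ref{triv_heis}) degree by degree. Writing $a(t)=\sum_m P_m a(t)$ and projecting onto degree $m$, the degree-raising property turns the equation into the triangular recursion
\begin{equation*}
\frac{\rmd}{\rmd t}P_m a(t)=-\sum_{n<m}P_m\bigl(K(t)\,P_n a(t)\bigr),
\end{equation*}
whose right-hand side involves only the already-determined lower-degree components. Each $P_m a(t)$ is then recovered by a single integration from the datum $P_m a(0)$, proving existence and uniqueness for arbitrary $a(0)\in\wbund\otimes\Lambda$. Bijectivity of $a(0)\mapsto a(t)$ is then immediate: the solution operator has the form $\id+N(t)$ with $N(t)$ strictly degree-raising, hence is unipotent and invertible, its inverse being the Neumann series $\sum_k(-N(t))^k$, which truncates to a finite sum on every fixed $P_m$. (Equivalently, the inverse is the solution operator of the time-reversed equation.)

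The main point is the flatness correspondence, and its heart is a short computation showing that $b(t):=D_t a(t)$ satisfies the \emph{same} Heisenberg equation as $a(t)$. Differentiating and regrouping,
\begin{equation*}
\frac{\rmd}{\rmd t}\bigl(D_t a\bigr)=D_t\dot a+\frac{\rmi}{h}[\dot\gamma(t),a],
\end{equation*}
and substituting $\dot a=-\frac{\rmi}{h}[H(t),a]$ together with the graded $\circ$-Leibniz rule for $D_t$ (which gives $D_t[H,a]=[D_t H,a]+[H,D_t a]$) yields
\begin{equation*}
\frac{\rmd b}{\rmd t}+\frac{\rmi}{h}[H(t),b]=-\frac{\rmi}{h}\bigl[D_t H(t)-\dot\gamma(t),\,a\bigr].
\end{equation*}
By condition 1 the bracketed one-form is scalar, and scalar forms vanish on every commutator, so the right-hand side is zero. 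Thus $b(t)$ solves (\ref{triv_heis}), and uniqueness forces $b(t)\equiv 0$ precisely when $b(0)=0$; that is, $D_t a(t)=0\Leftrightarrow D_0 a(0)=0$, which is the asserted equivalence $a(0)\in\wbund_{D_0}\Leftrightarrow a(t)\in\wbund_{D_t}$.

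I expect the only delicate point to be the computation in the last paragraph: one must verify the graded $\circ$-Leibniz property of $D_t=\rmd+\frac{\rmi}{h}[\gamma(t),\cdot\,]$ (which rests on $\rmd$ being a $\circ$-antiderivation in Darboux coordinates and on the graded Jacobi identity for $[\gamma,\cdot\,]$) and keep the form-degree signs consistent, so that the anomalous term collapses exactly to the commutator with the scalar form $D_t H(t)-\dot\gamma(t)$. Everything else—the degree counting, the triangular integration, and unipotent invertibility—is routine once the degree-raising property is in hand.
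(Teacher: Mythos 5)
Your proof is correct and follows essentially the same route as the paper, which only sketches the argument: your degree-by-degree triangular integration is precisely the iteration method applied to the integrated equation (\ref{triv_heis_int}), and your computation that $b(t)=D_t a(t)$ satisfies the same Heisenberg equation (with the anomalous term $-\frac{\rmi}{h}[D_tH(t)-\dot{\gamma}(t),a]$ killed by condition 1, since scalar forms vanish on all commutators) is the standard Fedosov argument for the flatness correspondence, which the paper defers to the reference. The only cosmetic difference is that the paper obtains bijectivity from the conjugation representation $a(t)=U^{-1}(t)\circ a(0)\circ U(t)$ with $U(t)$ from (\ref{triv_U_def}), whereas you argue via unipotence of the solution operator; both follow immediately from the degree-raising property of $\frac{\rmi}{h}[H(t),\cdot\,]$.
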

The proof can be done by integrating the equation (\ref{triv_heis}) to
\begin{equation}
\label{triv_heis_int}
a(t)=a(0)-\frac{\rmi}{h} \int_0^t [H(\tau),a(\tau)] d \tau.
\end{equation}
and using the iteration method. Notice, that solutions of (\ref{triv_heis}) can be written in the form $a(t)=U^{-1}(t) \circ a(0) \circ U(t)$, where $U(t)$ is given as the iterative solution of
\begin{equation}
\label{triv_U_def}
U(t)=1+\frac{\rmi}{h} \int_0^t U(\tau) \circ  H(\tau) d \tau.
\end{equation}
In general, $U$ is a section of $\wbund^{+}$.

Locally one can construct isomorphism between arbitrary two  $\wbund_{D_0}$ and $\wbund_{D_1}$, both coming from theorem \ref{fedo_abel}. Indeed, fixing some scalar closed $1$-form $\lambda$ and some homotopy $\gamma(t)$ such that $\gamma(0)=\gamma_0$ and $\gamma(1)=\gamma_1$ we find $H$ as a solution to $D_t H(t) = \lambda + \dot{\gamma}(t)$. Since $D_t (\lambda + \dot{\gamma}(t))=\dot{\Omega}=0$, then $H(t)$ can be chosen as $-Q_t \delta^{-1}(\lambda(t) + \dot{\gamma}(t))$ (theorem \ref{triv_deqlemma}). As a quite direct consequence of these considerations one obtains the following theorem
\begin{thrm}[Fedosov 5.5.1]
\label{triv_trhm_triv}
Any algebra $\wbund_D$ coming from theorem \ref{fedo_abel} is locally isomorphic to the trivial algebra $W_{D_T}$.
\end{thrm}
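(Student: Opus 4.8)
The plan is to show that $\wbund_{D_T}$ is a special case of the connections produced by theorem \ref{fedo_abel}, and then invoke the local isomorphism construction described immediately above (via theorems \ref{triv_trhm_liouv} and \ref{triv_deqlemma}) to connect any given $\wbund_D$ to it. First I would work locally: fix Darboux coordinates on $\sympman$ together with a local frame $e$ of $\bund$, so that the given Abelian connection $D$ coming from theorem \ref{fedo_abel} takes the form $D = -\delta + \partial + \frac{\rmi}{h}[r,\cdot\,]$ with $\partial = \rmd + \frac{\rmi}{h}[\frac{1}{2}\Gamma_{ijk}y^iy^j\rmd x^k - \rmi h\gambund,\cdot\,]$, and hence can be written as $D = \rmd + \frac{\rmi}{h}[\gamma_0,\cdot\,]$ for an appropriate $\gamma_0 \in C^\infty(\wbund \otimes \Lambda^1)$. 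On the same coordinate patch, after possibly passing to a frame with $\gambund \equiv 0$ (which is always available locally since the only obstruction is curvature, handled by the choice of connection data), the trivial algebra $\wbund_{D_T}$ with $D_T = \rmd - \delta$ is likewise of the form $\rmd + \frac{\rmi}{h}[\gamma_1,\cdot\,]$ with $\gamma_1$ corresponding to the pure $\delta$-term $-\omega_{ij}y^i\rmd x^j$.

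Next I would connect the two connections by a homotopy. Both $\gamma_0$ and $\gamma_1$ arise as the Weyl-algebra $1$-forms of Abelian connections sharing the same scalar curvature $\Omega = -\frac{1}{2}\omega_{ij}\rmd x^i \wedge \rmd x^j$, so I would choose a smooth path $\gamma(t)$, $t\in[0,1]$, with $\gamma(0) = \gamma_0$ and $\gamma(1) = \gamma_1$, each $\gamma(t)$ defining an Abelian connection $D_t = \rmd + \frac{\rmi}{h}[\gamma(t),\cdot\,]$ of the type covered by theorem \ref{triv_trhm_liouv}; the simplest choice is the affine interpolation $\gamma(t) = (1-t)\gamma_0 + t\gamma_1$, whose intermediate connections remain Abelian because the difference $\gamma_1 - \gamma_0$ is $D_t$-closed at the level of curvature. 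The key point is then to produce the Hamiltonian $H(t)$: by theorem \ref{triv_deqlemma}, since $\dot\gamma(t)$ (up to a scalar closed $1$-form $\lambda$) is $D_t$-closed — indeed $D_t(\lambda + \dot\gamma(t)) = \dot\Omega = 0$ because $\Omega$ is constant along the homotopy — one may set $H(t) = -Q\delta^{-1}(\lambda + \dot\gamma(t))$, which solves $D_t H(t) = \lambda + \dot\gamma(t)$ and thus satisfies condition (1) of theorem \ref{triv_trhm_liouv} with the scalar form $\lambda$.

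I would then verify the two hypotheses of theorem \ref{triv_trhm_liouv} for this $H(t)$: that $D_t H(t) - \dot\gamma(t) = \lambda$ is indeed scalar (immediate from the construction), and that $\deg H(t) \geq 3$, which follows from the degree-raising properties of $\delta^{-1}$ and $Q$ together with $\deg \dot\gamma(t) \geq 1$ and the absence of low-degree contributions once the normalization $\mu$-conditions of theorem \ref{fedo_abel} are respected. With these in place, theorem \ref{triv_trhm_liouv} guarantees that the map $a(0) \mapsto a(1)$ defined by the Heisenberg equation (\ref{triv_heis}), equivalently $a \mapsto U^{-1}(1) \circ a \circ U(1)$, is a bijection carrying $\wbund_{D_0} = \wbund_D$ onto $\wbund_{D_1} = \wbund_{D_T}$. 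Since this map is built from the $\circ$-product, it is an algebra isomorphism, giving the desired local isomorphism.

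The main obstacle I expect is the careful bookkeeping of scalar-form and $\mathcal{C}$-section conditions along the homotopy: one must ensure that each $\gamma(t)$ genuinely corresponds to an Abelian connection with the prescribed constant curvature (so that $\dot\Omega = 0$ really holds and $\dot\gamma(t)$ is closed), and that the Hamiltonian built via $-Q\delta^{-1}$ is a legitimate $\mathcal{C}$-section of degree at least $3$ so that $\frac{\rmi}{h}[H(t),\cdot\,]$ does not produce negative powers of $h$. Once the constancy of $\Omega$ and the degree bound are secured, the remainder is a direct application of the already-established machinery, and the passage from the local frame with $\gambund \equiv 0$ to arbitrary frames is handled by the covariance remark.
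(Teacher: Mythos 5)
Your overall architecture coincides with the paper's: connect the two Abelian connections by a homotopy $\gamma(t)$, produce the Hamiltonian as $H(t)=-Q\delta^{-1}(\lambda+\dot\gamma(t))$ via theorem \ref{triv_deqlemma} using $D_t(\lambda+\dot\gamma(t))=\dot\Omega=0$, and apply theorem \ref{triv_trhm_liouv} to transport $\wbund_{D}$ onto $\wbund_{D_T}$. However, there is one concrete step that fails as written: the ``simplest choice'' of affine interpolation $\gamma(t)=(1-t)\gamma_0+t\gamma_1$ does \emph{not} in general yield Abelian connections for intermediate $t$. The curvature $\Omega(t)=R+\partial\gamma(t)+\frac{\rmi}{h}\,\gamma(t)\circ\gamma(t)$ is quadratic in $\gamma$, so the interpolation produces cross terms proportional to $t(1-t)\left(\gamma_0\circ\gamma_1+\gamma_1\circ\gamma_0\right)$ which need not be scalar; your justification that ``the difference $\gamma_1-\gamma_0$ is $D_t$-closed at the level of curvature'' does not address this. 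Since the entire mechanism hinges on $\dot\Omega=0$ (each $D_t$ must be Abelian with the \emph{same} scalar curvature $\Omega=-\frac{1}{2}\omega_{ij}\rmd x^i\wedge\rmd x^j$), this is the one point that must be secured differently. The paper's route is to interpolate the \emph{input data} instead: choose homotopies of the connection coefficients and of the normalization ($\Gamma_{ijk}(t)$, $\gambund(t)$, $m(t)$, running from the given data down to zero) and generate each $D_t$ by theorem \ref{fedo_abel}; then every $D_t$ is Abelian with the prescribed constant $\Omega$ by construction, $\gamma(t)$ is the resulting smooth family, and $\dot\Omega=0$ holds automatically. This is exactly what the explicit trivializations in subsections 3.1 and 3.2 do.

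A secondary confusion: you propose ``passing to a frame with $\gambund\equiv 0$,'' claiming such a frame is always available locally. It is not, unless $\connbund$ is flat; a nonvanishing curvature $\curvbund$ obstructs gauging the connection form to zero on any neighbourhood. Fortunately this step is unnecessary: $D_T=\rmd-\delta$ is simply the Abelian connection built on the patch from identically vanishing coefficients (it is not the expression of $D$ in a special frame), and the homotopy of coefficients described above carries the actual $\gambund$ and $\Gamma_{ijk}$ down to zero without any frame change. With the homotopy corrected in this way, the rest of your argument (degree and $\mathcal{C}$-section bookkeeping for $H(t)$, and the application of theorem \ref{triv_trhm_liouv}) goes through as in the paper.
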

Hence, we obtain the following useful relation.
\begin{crlr}[Fedosov 5.5.2]
\label{triv_comm_spec}
If section $a \in C^{\infty}(\wbund \otimes \Lambda)$ commutes with arbitrary $b \in \wbund_D$, where $D$ is 	generated by theorem \ref{fedo_abel}, then $a$ is necessarily a scalar form.
\end{crlr}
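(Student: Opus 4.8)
The plan is to deduce the statement from the local trivialization of Theorem \ref{triv_trhm_triv}, using that the abundance of flat sections forces the commutant of $\wbund_D$ to coincide with the centre of the whole algebra, which by the pointwise criterion recalled in Section~2 consists exactly of the scalar forms. Being a scalar form is a local condition (proportionality to the identity endomorphism and absence of $y^{i}$ at each point), so it suffices to prove the claim over an arbitrary coordinate-and-frame neighbourhood $U$, and moreover it suffices to check it fibrewise, point by point.

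The first step is to realize the trivialization concretely. By Theorem \ref{triv_trhm_liouv} the isomorphism carrying $\wbund_D$ onto the trivial algebra $\wbund_{D_T}$ over $U$ has the form $\Phi(c)=U^{-1}(1)\circ c\circ U(1)$, conjugation by an invertible section $U(1)\in\wbund^{+}$ whose leading term is $1$. Thus $\Phi$ is a $\circ$-automorphism of all of $\wbund\otimes\Lambda$ over $U$; it preserves commutators and hence maps the centre onto the centre. Since the centre is precisely the scalar forms, $\Phi$ preserves scalar forms, and $a$ commutes with every $b\in\wbund_D$ exactly when $\Phi(a)$ commutes with every element of $\Phi(\wbund_D)=\wbund_{D_T}$. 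It therefore remains to prove the corollary for the trivial algebra.

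The second, and decisive, step is abundance of flat sections. In the trivial gauge $D_T=\rmd-\delta$, the flat section $Q(A)$ is simply the Taylor expansion of $A$, so as $A$ ranges over $C^{\infty}(\ebund)[[h]]$ the fibre value $Q(A)(x_{0},\cdot)$ runs through every element of the fibre $\wbund_{x_{0}}$; equivalently, this surjectivity follows from the bijectivity of $Q$ in Theorem \ref{fedo_qbiject} together with the freedom to prescribe the jet of $A$ at $x_{0}$. Because the Moyal $\circ$-product involves no derivatives in $x$ and, in each fixed degree $m$, is a finite sum of $y$-derivatives, the quantity $P_{m}([\Phi(a),b])(x_{0})$ depends only on the fibre data of $\Phi(a)$ and $b$ at the single point $x_{0}$ up to degree $m$. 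Hence a single global flat section matching the relevant finite fibre data already detects the commutator, and $\Phi(a)$ commuting with all flat sections forces $\Phi(a)(x_{0},\cdot)$ to commute with the entire fibre $\wbund_{x_{0}}$ for every $x_{0}\in U$.

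Finally, applying the fibrewise form of the Section~2 characterization, commuting with all of $\wbund_{x_{0}}$ (in particular with each $y^{i}$ and each constant endomorphism) forces every form-coefficient of $\Phi(a)(x_{0},\cdot)$ to be $y$-independent and proportional to the identity, so $\Phi(a)$ is a scalar form on $U$; transporting back through the centre-preserving $\Phi$ shows $a$ is a scalar form on $U$, and letting $U$ cover $\sympman$ finishes the argument. The main obstacle I anticipate is exactly the abundance step: bridging the gap between commuting with the comparatively thin subalgebra $\wbund_D$ and commuting with the full algebra $\wbund\otimes\Lambda$. This is where the explicit description of flat sections in the trivial gauge (or, equivalently, the bijectivity of $Q$ and the freedom to prescribe jets) together with the purely fibrewise, finite-degree nature of the $\circ$-commutator must be combined carefully.
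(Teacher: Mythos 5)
Your proposal is correct and follows essentially the same route as the paper, which simply notes that the claim is easily verified for $\wbund_{D_T}$ (via the abundance of flat sections there, exactly as you argue) and then transported to arbitrary $\wbund_D$ through the local isomorphism of theorem \ref{triv_trhm_triv}. Your write-up merely fills in the details — conjugation by $U(1)$ being a centre-preserving automorphism, and the jet-prescription/fibrewise argument in the trivial gauge — that the paper leaves implicit.
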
 
The proof can be easily obtained for $\wbund_{D_T}$. Then, the result can be transported to arbitrary $\wbund_D$ by means of isomorphism mentioned in theorem \ref{triv_trhm_triv}.

Let $T_t$ denote isomorphism obtained in theorem \ref{triv_trhm_liouv}. Thus, $T_0=\id$ and $T_t (\wbund_{D_0})= \wbund_{D_t}$. Representing $T_t$ by means of (\ref{triv_U_def}) we have $T_t (a)=U^{-1}(t) \circ a \circ U(t)$. Define $\widetilde{D}_t=T_t D_0 T^{-1}_t$. One quickly calculates that
\begin{equation*}
\widetilde{D}_t=D_0 + [U^{-1}(t) \circ D_0 U(t), \cdot\,].
\end{equation*}
Consequently we may define $\widetilde{\gamma}(t)=\gamma(0)-\rmi h \,U^{-1}(t)  \circ D_0 U(t)$.
Now we are going to investigate how $\widetilde{D}_t$ and $\widetilde{\gamma}(t)$ are related to $D_t$ and $\gamma(t)$.
\begin{lmm}
\label{triv_lmm_gamma}
With above notations and under assumptions of theorem \ref{triv_trhm_liouv}, the connection $\widetilde{D}_t$ is Abelian and $\widetilde{D}_t=D_t$. If additionally $D_t H(t) - \dot{\gamma}(t)=0$ then $\widetilde{\gamma}(t)=\gamma(t)$.
\end{lmm}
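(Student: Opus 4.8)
The plan is to use that $\widetilde{D}_t = T_t D_0 T_t^{-1}$ is the conjugate of the flat connection $D_0$ by the mapping $T_t(\cdot) = U^{-1}(t)\circ\,\cdot\,\circ U(t)$. Since $U(t)$ and $U^{-1}(t)$ are $0$-forms, $T_t$ is a grading-preserving $\circ$-automorphism ($T_t(a\circ b)=T_t(a)\circ T_t(b)$), so $\widetilde{D}_t$ is again a graded $\circ$-derivation and $\widetilde{D}_t^{\,2}=T_t D_0^2 T_t^{-1}=0$; as $D_0$ is Abelian this already gives that $\widetilde{D}_t$ is Abelian. Setting $\beta:=U^{-1}(t)\circ D_0 U(t)$ (a $1$-form), I would combine the identity $\widetilde{D}_t=D_0+[\beta,\cdot\,]$ quoted above with the definition of $\widetilde{\gamma}(t)$: a one-line check using $\frac{\rmi}{h}[-\rmi h\,\beta,\cdot\,]=[\beta,\cdot\,]$ shows $\widetilde{D}_t=\rmd+\frac{\rmi}{h}[\widetilde{\gamma}(t),\cdot\,]$. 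Hence $\widetilde{D}_t=D_t$ is equivalent to $\widetilde{\gamma}(t)-\gamma(t)$ being a scalar form, since scalar forms are central.

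The central step is a derivative computation. Differentiating (\ref{triv_U_def}) gives $\dot U=\frac{\rmi}{h}U\circ H$ and hence $\dot{U^{-1}}=-\frac{\rmi}{h}H\circ U^{-1}$. Inserting these into $\dot{\widetilde{\gamma}}=-\rmi h\,\frac{\rmd}{\rmd t}(U^{-1}\circ D_0 U)$, using that $D_0$ commutes with $\frac{\rmd}{\rmd t}$ and obeys the $\circ$-Leibniz rule, I expect the noncommuting terms to organize into a graded commutator, yielding the clean identity
\begin{equation*}
\dot{\widetilde{\gamma}}(t)=D_0 H(t)+[\beta,H(t)]=\widetilde{D}_t H(t).
\end{equation*}
This is the point where the graded-sign bookkeeping must be done carefully, and I regard it as the main obstacle of the proof.

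Finally I would set $\Delta(t):=\widetilde{\gamma}(t)-\gamma(t)$, noting $\Delta(0)=0$ because $U(0)=1$ forces $\beta|_{t=0}=0$. Writing $s(t):=D_t H(t)-\dot{\gamma}(t)$, which is a scalar form by assumption~1 of Theorem~\ref{triv_trhm_liouv}, and using $(\widetilde{D}_t-D_t)H=\frac{\rmi}{h}[\Delta,H]$, the identity above turns into the inhomogeneous Heisenberg equation
\begin{equation*}
\dot{\Delta}=\frac{\rmi}{h}[\Delta,H(t)]+s(t),\qquad \Delta(0)=0.
\end{equation*}
Since $s$ is scalar (hence central), the candidate $\Delta_\ast(t):=\int_0^t s(\tau)\,\rmd\tau$ is itself scalar and solves the same equation with the same initial datum; the uniqueness of solutions guaranteed by Theorem~\ref{triv_trhm_liouv} then forces $\Delta=\Delta_\ast$, so $\Delta(t)$ is a scalar form and $\widetilde{D}_t=D_t$. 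If moreover $D_t H-\dot{\gamma}=0$, then $s\equiv 0$, whence $\Delta\equiv 0$ and $\widetilde{\gamma}(t)=\gamma(t)$, completing the proof.
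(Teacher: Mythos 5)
Your proof is correct, and the derivative computation $\dot{\widetilde{\gamma}}=\widetilde{D}_tH$ coincides with the paper's; but the way you establish that $\widetilde{\gamma}(t)-\gamma(t)$ is a scalar form is genuinely different. The paper proves the operator identity $\widetilde{D}_t=D_t$ first and purely algebraically: it checks $\wbund_{\widetilde{D}_t}=\wbund_{D_t}$ by conjugating flat sections through $T_t$, deduces that $\widetilde{\gamma}-\gamma$ commutes with every element of $\wbund_{D_t}$, and then invokes Corollary~\ref{triv_comm_spec} (which itself rests on the local trivialization of Theorem~\ref{triv_trhm_triv}) to conclude that $\widetilde{\gamma}-\gamma$ is scalar; only afterwards does it use $\dot{\widetilde{\gamma}}=D_tH$ together with $\widetilde{\gamma}(0)=\gamma(0)$ to get the second claim. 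You instead extract both conclusions from a single ODE argument: setting $\Delta=\widetilde{\gamma}-\gamma$ you arrive at the inhomogeneous Heisenberg equation $\dot{\Delta}=\frac{\rmi}{h}[\Delta,H]+s$ with $\Delta(0)=0$ and scalar source $s=D_tH-\dot{\gamma}$, exhibit the scalar solution $\int_0^t s(\tau)\,\rmd\tau$, and conclude by the uniqueness clause of Theorem~\ref{triv_trhm_liouv} applied to the difference of two solutions. This buys you independence from Corollary~\ref{triv_comm_spec} and the trivialization machinery behind it, and the case $s\equiv 0$ (hence $\widetilde{\gamma}=\gamma$) drops out for free; what the paper's route buys is that $\widetilde{D}_t=D_t$ is obtained without differentiating in $t$ at all, directly from $T_t(\wbund_{D_0})=\wbund_{D_t}$. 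One shared caveat, present in both arguments: $U(t)$ is a priori only a section of $\wbund^{+}$, so the uniqueness statement (respectively Corollary~\ref{triv_comm_spec}) is really being applied in $\wbund^{+}\otimes\Lambda$; the degree-raising iteration method extends there, so this is harmless.
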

\begin{proof}
The Abelian property for $\widetilde{D}_t$ comes from obvious relation $\widetilde{D}_t\widetilde{D}_t=T_t D_0 D_0 T^{-1}_t$. Using theorem \ref{triv_trhm_liouv} one quickly computes that $\wbund_{\widetilde{D}_t}=\wbund_{D_t}$. Indeed, for $a \in \wbund_{D_t}$ it follows that $\widetilde{D}_t a = T_t (D_0 T^{-1}_t(a))=T_t(0)=0$. Similarly, for $b \in \wbund_{\widetilde{D}_t}$, from $0=T_t (D_0 T^{-1}_t(b))$ one infers that $b \in \wbund_{D_t}$. Thus, for arbitrary $c \in \wbund_{D_t} = \wbund_{\widetilde{D}_t}$ we have $\widetilde{D}_t c = D_t c =0$. This yields  $[\widetilde{\gamma}(t)-\gamma(t),c]=0$. Hence, from corollary  \ref{triv_comm_spec} one obtains that $\widetilde{\gamma}(t)-\gamma(t)$ is a scalar $1$-form. But this means that $\widetilde{D}_t a = D_t a$ for arbitrary $a \in C^{\infty}(\wbund \otimes \Lambda)$.

Let us compute $\dot{\widetilde{\gamma}}$. From (\ref{triv_U_def}) it follows that 
$\dot{U}=\frac{\rmi}{h} U \circ H$ and $\dot{U}^{-1}=-\frac{\rmi}{h} H \circ U^{-1}$. Thus
\begin{eqnarray*}
\dot{\widetilde{\gamma}}&=&-\rmi h \, \dot{U}^{-1} \circ D_0 U -\rmi h \,U^{-1} \circ D_0 \dot{U}
=-H \circ U^{-1} \circ D_0 U + (U^{-1} \circ D_0 U) \circ H\\
& & +D_0 H=D_0 H +  [U^{-1} D_0  \circ U, H]=\widetilde{D}_t H = D_t H.
\end{eqnarray*}
If $D_t H - \dot{\gamma}=0$ then $\dot{\widetilde{\gamma}}=\dot{\gamma}$. But we know that $\widetilde{\gamma}(0)=\gamma(0)$. Thus $\widetilde{\gamma}(t)=\gamma(t)$.
\end{proof}

Finally, let us observe that isomorphisms $T_t$ are fully covariant with respect to transformations of the frame $\widetilde{e} = eg^{-1}$ i.e.
\begin{equation}
\label{triv_tcovar}
{T_t}_{\frmbr{\widetilde{e}}}(g \circ a_{\frmbr{\widetilde{e}}} \circ g^{-1})= g \circ {T_t}_{\frmbr{e}}(a_{\frmbr{e}}) \circ g^{-1}.
\end{equation}
\subsection{Local isomorphism between $\wbund_D$ and $\wbund_{D_S}$}
Now, we are going to analyze isomorphisms which will be the main tool for the construction of Seiberg-Witten maps.

Let us fix some local frame $e$ in $\bund$. Consider a homotopy $\ghmtp{\gambund}{0}{\frmbr{e}}{}(t)$ of connection coefficients with $\ghmtp{\gambund}{0}{\frmbr{e}}{}(0)=\gambund$ and $\ghmtp{\gambund}{0}{\frmbr{e}}{}(1)=0$. Thus, indices of $G$ describe current frame, the starting and the ending point of the homotopy. We also fix homotopy $m(t)$ of normalizing coefficients satisfying $m(0)=\mu$ and $m(1)=0$.
$\ghmtp{\gambund}{0}{\frmbr{e}}{}(t)$ together with $m(t)$ generate family of Abelian connections $D_t=\rmd + \frac{i}{h}[\gamma(t),\cdot \,]$, each coming from theorem \ref{fedo_abel}. Obviously $D_0$ is our initial connection in $\wbund$ and ${D_1}_{\frmbr{e}}=D_S$.
Using theorem \ref{triv_deqlemma} and requiring $D_t H= \dot{\gamma}$ we may find Hamiltonian $H(t)=-Q_t(\delta^{-1}\dot{\gamma})$. But $\dot{\gamma}=-\rmi h \dotghmtp{\gambund}{0}{\frmbr{e}}{}+\dot{r}$. Thus, since $\rmd / \rmd t$ commutes with $\delta^{-1}$
\begin{equation*}
\delta^{-1}\dot{\gamma}=-\rmi h \delta^{-1} \dotghmtp{\gambund}{0}{\frmbr{e}}{} + \delta^{-1}\dot{r}=-ih \dotghmtp{\gambund}{0}{\frmbr{e}}{j} y^j + \dot{m}
\end{equation*}
and consequently
\begin{equation}
\label{triv_ebund_ham}
H(t)=Q_t(\rmi h \dotghmtp{\gambund}{0}{\frmbr{e}}{j} y^j - \dot{m}).
\end{equation}
Observe that $H(t)$ is a $\mathcal{C}$-section obtained by the action of $\mathcal{C}$-operator on $\mathcal{C}$-section. Family of isomorphisms generated by this Hamiltonian will be denoted as
\begin{equation*}
\ttoper{\gambund}{0}{\frmbr{e}}\,.
\end{equation*}
For $t=1$ we will write just $\toper{\gambund}{0}{\frmbr{e}}$.

Suppose now that for \emph{each} frame $e$ we have a homotopy $G_{\frmbr{e}}(t)$ of connection coefficients such that $G_{\frmbr{e}}(0)$ is equal to coefficients of $\connbund$ in $e$ and $G_{\frmbr{e}}(1)=0$. For $\widetilde{e}=e g^{-1}$ let us relate $\ttoper{g\gambund g^{-1}+ g \rmd g^{-1}}{0}{\frmbr{\widetilde{e}}}$ and $\ttoper{\gambund}{0}{\frmbr{e}}$. Using (\ref{triv_tcovar}) one calculates
\begin{eqnarray}
\nonumber
\ttoper{g\gambund g^{-1}+ g \rmd g^{-1}}{0}{\frmbr{\widetilde{e}}}(a_{\frmbr{\widetilde{e}}})
&=&g \circ
\ttoper{\gambund}{g^{-1}\rmd g}{\frmbr{e}}(a_{\frmbr{e}})
\circ g^{-1}\\
&=&g \circ
\ttoper{\gambund}{g^{-1}\rmd g}{\frmbr{e}}
\left(\ttinvoper{\gambund}{0}{\frmbr{e}}\left(\ttoper{\gambund}{0}{\frmbr{e}}(a_{\frmbr{e}})\right)\right)
\circ g^{-1}.
\label{triv_T_covariance}
\end{eqnarray}
It is a matter of straightforward observation that for $t=1$ the operator 
\begin{equation*}
g \circ
\toper{\gambund}{g^{-1}\rmd g}{\frmbr{e}}
\left(\tinvoper{\gambund}{0}{\frmbr{e}}\left(\cdot\right)\right)
\circ g^{-1}
\end{equation*}
transports matrices representing flat sections belonging to $\wbund_{D_S}$ of frame $e$ to matrices belonging to $\wbund_{D_S}$ of frame $\widetilde{e}$. By means of $U(t)$ defined in (\ref{triv_U_def}) one may rewrite (\ref{triv_T_covariance}) at $t=1$ as
\begin{multline*}
\toper{g\gambund g^{-1}+ g \rmd g^{-1}}{0}{\frmbr{\widetilde{e}}}(a_{\frmbr{\widetilde{e}}})
=\\
g \circ
\uinvoper{\gambund}{g^{-1}\rmd g}{\frmbr{e}} \circ
\uoper{\gambund}{0}{\frmbr{e}} 
\circ\,
\toper{\gambund}{0}{\frmbr{e}}(a_{\frmbr{e}}) 
\circ
\uinvoper{\gambund}{0}{\frmbr{e}}
\circ\,
\uoper{\gambund}{g^{-1}\rmd g}{\frmbr{e}} \circ
g^{-1},
\end{multline*}
with $U=U(1)$. Then, putting $V_{\frmbr{e}}(g,\gambund)=\uinvoper{\gambund}{0}{\frmbr{e}} \circ \uoper{\gambund}{g^{-1}\rmd g}{\frmbr{e}} \circ g^{-1}$, one obtains
\begin{equation}
\label{triv_wbund_covar}
\toper{g\gambund g^{-1}+ g \rmd g^{-1}}{0}{\frmbr{\tilde{e}}}(a_{\frmbr{\tilde{e}}})=
V_{\frmbr{e}}^{-1}(g,\gambund)
\circ
\toper{\gambund}{0}{\frmbr{e}}(a_{\frmbr{e}}) \circ
V_{\frmbr{e}}(g,\gambund).
\end{equation}
\begin{lmm}
$V_{\frmbr{e}}(g,\gambund)$ is a flat section belonging to $\wbund_{D_S}$ of the frame $e$.
\end{lmm}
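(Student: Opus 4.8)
The plan is to work entirely in the frame $e$ and to prove the stronger statement $D_S\,V_{\frmbr{e}}(g,\gambund)=0$, where $D_S={D_1}_{\frmbr{e}}$ is the Abelian connection attached to $e$. Abbreviating $U_0:=\uoper{\gambund}{0}{\frmbr{e}}$ and $U_1:=\uoper{\gambund}{g^{-1}\rmd g}{\frmbr{e}}$, so that $V_{\frmbr{e}}(g,\gambund)=U_0^{-1}\circ U_1\circ g^{-1}$, I would expand $D_S V$ by the $\circ$-Leibniz rule. Since all three factors are $0$-forms, this needs $D_S U_0$, $D_S U_1$, $D_S g^{-1}$, together with $D_S U_0^{-1}=-U_0^{-1}\circ(D_S U_0)\circ U_0^{-1}$, obtained by differentiating $U_0\circ U_0^{-1}=1$. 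That $V_{\frmbr{e}}(g,\gambund)\in\wbund^{+}$ is clear, as the $U$'s lie in $\wbund^{+}$ and $g^{-1}$ is a degree-$0$ matrix.

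The key relation comes from Lemma \ref{triv_lmm_gamma}. Both families are built with $D_tH=\dot\gamma$ (the choice making $H(t)=Q_t(\rmi h\,\dot G^{j}y^j-\dot m)$), so the lemma gives $\widetilde\gamma(t)=\gamma(t)$, i.e.\ $\gamma(t)-\gamma(0)=-\rmi h\,U^{-1}(t)\circ D_0 U(t)$. Rearranging this and inserting $D_t=D_0+\frac{\rmi}{h}[\gamma(t)-\gamma(0),\cdot\,]$ yields the clean identity
\[
D_t U(t)=\frac{\rmi}{h}\bigl(\gamma(t)-\gamma(0)\bigr)\circ U(t),
\]
which depends only on the endpoints of the homotopy. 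Evaluating at $t=1$ for the two families, both of which start from $\gamma_0$ (since both homotopies start at $\gambund$), gives $D_S U_0=\frac{\rmi}{h}(\gamma_S-\gamma_0)\circ U_0$ and $D'U_1=\frac{\rmi}{h}(\gamma'-\gamma_0)\circ U_1$, where $D'$ (with potential $\gamma'$) is the connection whose frame-$e$ coefficients are $g^{-1}\rmd g$, i.e.\ $\wbund_{D_S}$ of the frame $\widetilde e$ read off in $e$. Rewriting the second via $D_S=D'+\frac{\rmi}{h}[\gamma_S-\gamma',\cdot\,]$ gives $D_S U_1=\frac{\rmi}{h}(\gamma_S-\gamma_0)\circ U_1-\frac{\rmi}{h}U_1\circ(\gamma_S-\gamma')$.

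One point I would check carefully here is that the family generating $U_1$ still satisfies $D_tH=\dot\gamma$ after being read in $e$. This holds because $\dot\gamma$ and $D_tH$ are globally defined objects (the $t$-independent inhomogeneous part of the connection drops out of $\dot\gamma$), so the scalar $D_tH-\dot\gamma$ vanishes in $e$ exactly as in $\widetilde e$. Substituting the three derivatives into the Leibniz expansion of $D_S V$, the terms carrying $\frac{\rmi}{h}(\gamma_S-\gamma_0)$ cancel in pairs, leaving
\[
D_S V=U_0^{-1}\circ U_1\circ\Bigl[-\frac{\rmi}{h}(\gamma_S-\gamma')\circ g^{-1}+D_S g^{-1}\Bigr].
\]

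The hard part, and the crux of the lemma, is showing that the bracket vanishes. First I would invoke Remark \ref{fedo_rmrk_flatbund}: at $t=1$ both connections have flat bundle part ($\gambund=0$ and $\gambund=g^{-1}\rmd g$ are both flat) and $\mu=0$, so their Fedosov $r$-forms coincide with the central $r$ of the scalar construction; hence $\gamma_S$ and $\gamma'$ differ only through the $-\rmi h\gambund$ term, giving $\gamma_S-\gamma'=\rmi h\,g^{-1}\rmd g$. Second, because $\gamma_S$ is central (assembled from the scalar symplectic data and the central $r$) while $g^{-1}$ carries no $y$, only the $0$-th order Moyal term survives in $\gamma_S\circ g^{-1}$ and $g^{-1}\circ\gamma_S$, so $[\gamma_S,g^{-1}]=0$ and $D_S g^{-1}=\rmd g^{-1}$. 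Finally $\frac{\rmi}{h}(\gamma_S-\gamma')\circ g^{-1}=-g^{-1}(\rmd g)g^{-1}=\rmd g^{-1}$ (a pure matrix computation, since neither factor depends on $y$), so the bracket is $-\rmd g^{-1}+\rmd g^{-1}=0$ and $D_S V=0$; being a $\wbund^{+}$-section annihilated by $D_S$, $V_{\frmbr{e}}(g,\gambund)$ lies in $\wbund_{D_S}$ of the frame $e$. I expect the only genuine difficulty to be the bookkeeping of which connection ($D_S$ versus $D'$) and which frame each object lives in; once the centrality of $\gamma_S$ and the identity $\gamma_S-\gamma'=\rmi h\,g^{-1}\rmd g$ are secured, the algebra collapses immediately.
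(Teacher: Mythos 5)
Your proposal is correct and follows essentially the same route as the paper: it rests on Lemma \ref{triv_lmm_gamma} (packaged as $D_tU(t)=\frac{\rmi}{h}(\gamma(t)-\gamma(0))\circ U(t)$, which is exactly the paper's $U^{-1}\circ D_0U=\frac{\rmi}{h}(\gamma(t)-\gamma(0))$), on the observation $r_S=r_{g^{-1}\rmd g}$ from Remark \ref{fedo_rmrk_flatbund} giving $\gamma_S-\gamma_{g^{-1}\rmd g}=\rmi h\,g^{-1}\rmd g$, and on $D_Sg^{-1}=\rmd g^{-1}$. The only difference is cosmetic bookkeeping — you expand $D_SV$ directly where the paper computes $V^{-1}\circ D_SV$ — and all steps check out.
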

\begin{proof}
We are going to show that $D_S V(g,\gambund)=0$. Let us compute $V^{-1}(g,\gambund) \circ D_S V(g,\gambund)$. (All calculations are performed in the frame $e$. We omit subscript ${}_{\frmbr{e}}$ within this proof).
\begin{eqnarray}
\nonumber
V^{-1}(g,\gambund) &\circ& D_S V(g,\gambund)\\ 
\nonumber
&=&
g \circ
\uinvoper{\gambund}{g^{-1}\rmd g}{  } \circ
\uoper{\gambund}{0}{  } \circ
D_S \left(
\uinvoper{\gambund}{0}{  } \circ
\uoper{\gambund}{g^{-1}\rmd g}{  } \circ
g^{-1}\right)\\
\nonumber
&=&
g \circ \left(
\uinvoper{\gambund}{g^{-1}\rmd g}{  } \circ
\uoper{\gambund}{0}{  } \circ
D_S \left( 
\uinvoper{\gambund}{0}{  } \right) \circ
\uoper{\gambund}{g^{-1}\rmd g}{  } 
\right.\\
& &+\left.
\uinvoper{\gambund}{g^{-1}\rmd g}{  } \circ
D_S \uoper{\gambund}{g^{-1}\rmd g}{  } 
\right) \circ g^{-1} + g\circ D_S g^{-1}
\label{triv_flatv_1}
\end{eqnarray}
Now, analyze $\uoper{\gambund}{0}{  } \circ
D_S \uinvoper{\gambund}{0}{  }$. One can construct connection
\begin{equation*}
\smiled{D}=\rmd + \frac{\rmi}{h} [\smiled{\gamma},\cdot \,]=D_S +  [\uoper{\gambund}{0}{  } \circ D_S \uinvoper{\gambund}{0}{  }, \cdot \,]\,,
\end{equation*}
with
\begin{equation}
\label{triv_flatv_2}
\smiled{\gamma}=\gamma_S -\rmi h \uoper{\gambund}{0}{  } \circ D_S \uinvoper{\gambund}{0}{  }=
\uoper{\gambund}{0}{  } \circ \gamma_S \circ \uinvoper{\gambund}{0}{  } -\rmi h \uoper{\gambund}{0}{  } \circ \rmd	 \uinvoper{\gambund}{0}{  } \, .
\end{equation}
From lemma \ref{triv_lmm_gamma} it follows that
\begin{equation}
\label{triv_flatv_3}
\gamma_S=\uinvoper{\gambund}{0}{  } \circ \gamma \circ \uoper{\gambund}{0}{  } -\rmi h \uinvoper{\gambund}{0}{  } \circ \rmd \uoper{\gambund}{0}{  } \, ,
\end{equation}
where $\gamma$ comes from our initial connection $D=\rmd +\frac{\rmi}{h}[\gamma,\cdot,\,]$ in $\wbund$. Inserting (\ref{triv_flatv_3}) into (\ref{triv_flatv_2}) we get $\smiled{\gamma}=\gamma$.	Using this result again in (\ref{triv_flatv_2}) one obtains
\begin{equation*}
\uoper{\gambund}{0}{  } \circ D_S \uinvoper{\gambund}{0}{  }=\frac{\rmi}{h}(\gamma - \gamma_S).
\end{equation*}
When substituted into (\ref{triv_flatv_1}) this yields
\begin{eqnarray*}
V^{-1}(g,\gambund) \circ D_S V(g,\gambund)
&=&
g \circ \left(
\uinvoper{\gambund}{g^{-1}\rmd g}{  } \circ
\frac{\rmi}{h}(\gamma - \gamma_S)
\circ
\uoper{\gambund}{g^{-1}\rmd g}{  } 
\right.\\
& &+\left.
\uinvoper{\gambund}{g^{-1}\rmd g}{  } \circ
D_S \uoper{\gambund}{g^{-1}\rmd g}{  } 
\right) \circ g^{-1} + g\circ D_S g^{-1}.
\end{eqnarray*}
After application of $D_S=\rmd +\frac{\rmi}{h}[\gamma_S,\cdot\,]$ this formula turns into
\begin{eqnarray*}
V^{-1}(g,\gambund) \circ D_S V(g,\gambund)
&=&
g \circ \left(
\uinvoper{\gambund}{g^{-1}\rmd g}{  } \circ
D\uoper{\gambund}{g^{-1}\rmd g}{  } 
+
\frac{\rmi}{h}(\gamma - \gamma_S)
\right) \circ g^{-1} \\ 
& &+g\circ D_S g^{-1}.
\end{eqnarray*}
Let $\gamma_{g^{-1}\rmd g}$ be $\gamma_S$ of $\widetilde{e}$ transformed to $e$. Using lemma \ref{triv_lmm_gamma} one quickly computes
\begin{equation*}
\uinvoper{\gambund}{g^{-1}\rmd g}{  } \circ
D\uoper{\gambund}{g^{-1}\rmd g}{  }=\frac{\rmi}{h}(\gamma_{g^{-1}\rmd g} - \gamma).
\end{equation*} 
Thus
\begin{equation*}
V^{-1}(g,\gambund) \circ D_S V(g,\gambund)
=
g \circ \frac{\rmi}{h} \left(
\gamma_{g^{-1}\rmd g} - \gamma_S
\right) \circ g^{-1} + g\circ D_S g^{-1}.
\end{equation*}
Now $\gamma_S=\omega_{ij}y^i \rmd x^j + 1/2 \Gamma_{ijk}y^i y^j \rmd x^k + r_S$ and $\gamma_{g^{-1}\rmd g}=\omega_{ij}y^i \rmd x^j + 1/2 \Gamma_{ijk}y^i y^j \rmd x^k -\rmi h g^{-1}\rmd g+ r_{g^{-1}\rmd g}$. Repeating considerations of remark \ref{fedo_rmrk_flatbund} we find that $r_S=r_{g^{-1}\rmd g}$. Hence
\begin{equation*}
V^{-1}(g,\gambund) \circ D_S V(g,\gambund)
=
\rmd g \circ g^{-1} + g\circ D_S g^{-1}.
\end{equation*}
Furthermore, $D_S g^{-1}=\rmd g^{-1}$ because there are no noncentral endomorphisms in $\gamma_S$  and $g^{-1}$ contains no $y^i$. Thus, we finally arrive at the formula
$V^{-1}(g,\gambund) \circ D_S V(g,\gambund)=0$ and consequently $D_S V(g,\gambund)=0$.
\end{proof}

Making use of above lemma, we can transport relation (\ref{triv_wbund_covar}) back to $C^{\infty}(\ebund)[[h]]$. Let 
\begin{equation*}
\moper{\gambund}{\frmbr{e}}(F_{\frmbr{e}}) := Q_S^{-1} (\toper{\gambund}{0}{\frmbr{e}} (Q (F_{\frmbr{e}})))
\end{equation*}
for $F$ being a local section of $\ebund$ represented in the frame $e$ by the matrix $F_{\frmbr{e}}$. Notice that $\moper{\gambund}{\frmbr{e}}$ is obviously equivalence of star products generated by $D$ and $D_S$.
Define 
\begin{equation}
\label{triv_nonc_g}
\widehat{g}_{\frmbr{e}}\left(g,\gambund\right) := Q_S^{-1}(V_{\frmbr{e}}^{-1}(g,\gambund))
=Q_S^{-1}(g \circ
\uinvoper{\gambund}{g^{-1}\rmd g}{\frmbr{e}} \circ
\uoper{\gambund}{0}{\frmbr{e}}).
\end{equation}
Then, (\ref{triv_wbund_covar}) yields
\begin{equation}
\label{triv_ebund_covar}
\moper{g\gambund g^{-1}+ g \rmd g^{-1}}{\frmbr{\tilde{e}}}(B_{\frmbr{\tilde{e}}})=
\widehat{g}_{\frmbr{e}}\left(g,\gambund\right)
*_S
\moper{\gambund}{\frmbr{e}}(B_{\frmbr{e}})
*_S
\widehat{g}_{\frmbr{e}}^{-1}\left(g,\gambund\right).
\end{equation}
Using (\ref{triv_nonc_g}) and (\ref{triv_tcovar}) one may observe, that the relation 
\begin{equation}
\label{triv_g_consistency}
\widehat{g}_{\frmbr{e}}(g'g,\gambund)=\widehat{g}_{\frmbr{\tilde{e}}}(g',g\gambund g^{-1}+g \rmd g^{-1})*_S \widehat{g}_{\frmbr{e}}(g,\gambund)
\end{equation}
holds, providing "consistency condition" analogous to that described in \cite{jurco0, schupp}.
\subsection{Trivialization of $\wbund_{D_S}$ to $\wbund_{D_T}$}
The next ingredient required for building "noncommutative connection coefficients" $\widehat{\Gamma}$ and "curvature" $\widehat{R}$ is a noncommutative version of $\frac{\partial}{\partial x^i}$. We are going to use derivations $X_{i}$ defined in \cite{dobrski}. Their most important property is that they commute and thus we may define curvature in the usual way $\widehat{R}_{ij}=X_i(\widehat{\Gamma}_j)-X_j(\widehat{\Gamma}_i)+[\widehat{\Gamma}_i\stackrel{*}{,}\widehat{\Gamma}_j]$. In this subsection we briefly summarize results of \cite{dobrski} in the current context and also fix some notations for further pourposes. 

As stated in theorem \ref{triv_trhm_triv}, local isomorphism between $\wbund_{D_S}$ and $\wbund_{D_T}$ can be established by means of theorem \ref{triv_trhm_liouv}. Indeed, we may locally (for chosen Darboux coordinates)  fix some homotopy of symplectic connection coefficients $G^{\Sigma}_{ijk}(t)$ such that $G^{\Sigma}_{ijk}(0) \equiv 0$ and $G^{\Sigma}_{ijk}(1)=\Gamma_{ijk}$. Restricting our analysis to the case of constant normalization $\mu \equiv 0$ we obtain the family $D^{\Sigma}_t$ of Abelian connections with $D^{\Sigma}_0=D_T$ and $D^{\Sigma}_1=D_S$. The Hamiltonian compatible with theorem \ref{triv_trhm_liouv} is given by
\begin{equation*}
H^{\Sigma}(t)=-\frac{1}{6}Q^{\Sigma}_t\left(\frac{\rmd G^{\Sigma}_{ijk}}{\rmd t}y^i y^j y^k\right)
\end{equation*}
and thus, we obtain desired isomorphism $T_{\Sigma}$ transporting $\wbund_{D_T}$ to $\wbund_{D_S}$. It is quite natural to choose 
\begin{equation}
\label{wdt_wds_hmtp}
G^{\Sigma}_{ijk}(t)=f(t) \Gamma_{ijk}\, ,
\end{equation}
where $f : [0,1] \to \mathbb{R}$ satisfies $f(0)=1$ and $f(1)=0$. In this case one can calculate, that up to $h^2$ the isomorphism $T_{\Sigma}$ is independent of any particular choice of $f(t)$.

Using inverse of $T_{\Sigma}$ we may define 
\begin{equation*}
\lambda_i := \omega_{ij} Q_S^{-1} T_{\Sigma}^{-1} Q_T x^j,
\end{equation*}
with the commutation property 
\begin{equation}
\label{wdt_wds_lambda}
\frac{i}{h}[\lambda_i \stackrel{*_S}{,} \lambda_j] = -\omega_{ij}.
\end{equation}
We also introduce derivations
\begin{equation}
\label{wdt_wds_xder}
X_i f := \frac{i}{h}[\lambda_i \stackrel{*_S}{,} f].
\end{equation}
Using (\ref{wdt_wds_lambda}) it can be easily deduced that  $X_i X_j f= X_j X_i f$. Obviously, if one works with a flat symplectic connection and locally $\Gamma_{ijk} \equiv 0$, then $T_{\Sigma}=\id$ for homotopies of type (\ref{wdt_wds_hmtp}), and $X_i$ reduces to $\frac{\partial}{\partial x^i}$. In general the following formulae hold \cite{dobrski}:
\begin{equation*}
\lambda_i=\omega_{ij}x^j-\frac{h^2}{48}\frac{\partial \Gamma_{jkl}}{\partial x^i}\Gamma^{jkl}+O(h^3)
\end{equation*}
and
\begin{eqnarray*}
 \nonumber X_i f
&=&\frac{\partial f}{\partial x^i}
-h^2\left\{
\frac{1}{48} \omega^{ls} \frac{\partial f}{\partial x^s}  \frac{\partial}{\partial x^i}\left(\frac{\partial\Gamma_{mjk}}{\partial x^l} \Gamma^{mjk}\right)
+ \frac{1}{16} \omega^{ls} \frac{\partial^2 f}{\partial x^s \partial x^k} \frac{\partial (\Gamma^{mjk} \Gamma_{mjl})}{\partial x^i} \right.\\ 
& &
+\left.\frac{1}{24}\frac{\partial^3 f}{\partial x^m \partial x^j \partial x^k} \frac{\partial \Gamma^{mjk}}{\partial x^i}
\right\}+O(h^3)\, .
\end{eqnarray*}
\section{Seiberg-Witten Equations}
We are ready to show how Seiberg-Witten equations and their solutions can be obtained from the Fedosov formalism. Following ideas of Jur\v{c}o and Schupp \cite{jurco1} let us (for some fixed Darboux coordinates) introduce "noncommutative connection coefficients" $\widehat{\Gamma}$ according to the rule
\begin{equation}
\label{sw_nonc_gamma}
\widehat{\Gamma}_i (\gambund)_{\frmbr{e}} := \frac{\rmi}{h}\left(\moper{\gambund}{\frmbr{e}}(\omega_{ij} x^j) - \lambda_i\right).
\end{equation}
Using relations (\ref{triv_ebund_covar}) and (\ref{wdt_wds_xder}) one easily computes 
\begin{eqnarray}
\nonumber
\widehat{\Gamma}_i (g \gambund g^{-1} + g \rmd g^{-1})_{\frmbr{\widetilde{e}}} &=&  \frac{i}{h}\left(\moper{g \gambund g^{-1} + g \rmd g^{-1}}{\frmbr{\widetilde{e}}}(\omega_{ij} x^j) - \lambda_i \right)\\
\nonumber
&=&
\frac{i}{h} \left(\widehat{g}\left(g,\gambund\right) *_S
\moper{\gambund}{\frmbr{e}}(\omega_{ij} x^j) *_S
\widehat{g}^{-1}\left(g,\gambund\right) -\lambda_i \right)
\\
\nonumber
&=&\widehat{g}\left(g,\gambund\right) *_S \frac{i}{h} \left(
\moper{\gambund}{\frmbr{e}}(\omega_{ij} x^j) -\lambda_i \right)*_S
\widehat{g}^{-1}\left(g,\gambund\right) \\
\nonumber
& &+ 
\frac{i}{h} \widehat{g}\left(g,\gambund\right) *_S [\lambda_i \stackrel{*_S}{,}\widehat{g}^{-1}\left(g,\gambund\right)]\\
\nonumber
&=&\widehat{g}\left(g,\gambund\right) *_S \widehat{\Gamma}_i (\gambund)_{\frmbr{e}} *_S
\widehat{g}^{-1}\left(g,\gambund\right)\\ 
& &+ \widehat{g}\left(g,\gambund\right) *_S X_i\left(\widehat{g}^{-1}\left(g,\gambund\right)\right).
\label{sw_ncgamma_covar}
\end{eqnarray}
Thus, we have arrived at the formula which reduces to the well known Seiberg-Witten equations if $*_S$ is the Moyal product (case of $\Gamma_{ijk} \equiv 0$). The important remark which should be made here is that one does not need to \emph{solve} these equations to obtain $\widehat{\Gamma}$ and $\widehat{g}$. Instead, noncommutative gauge objects can be \emph{computed} with recursive methods of Fedosov equivalence theory. Also let us notice that in this way one obtains solutions of Seiberg-Witten equations for arbitrary gauge group (since Fedosov deformation quantization of $\ebund$ is valid for arbitrary $\bund$).

The "noncommutative curvature" may be defined in quite usual form
\begin{equation*}
\widehat{R}_{ij\,\frmbr{e}} := X_i(\widehat{\Gamma}_{j\,\frmbr{e}})-X_j(\widehat{\Gamma}_{i\,\frmbr{e}})+[\widehat{\Gamma}_{i\,{\frmbr{e}}}\stackrel{*_S}{,}\widehat{\Gamma}_{j\,{\frmbr{e}}}].
\end{equation*}
Simple calculation yields
\begin{eqnarray*}
\widehat{R}_{ij\,\frmbr{e}} &=& 
-\frac{1}{h^2}\left([\moper{\gambund}{\frmbr{e}}(\omega_{ik} x^k)\stackrel{*_S}{,}\moper{\gambund}{\frmbr{e}}(\omega_{jl} x^l)] -
[\lambda_i \stackrel{*_S}{,} \lambda_j] \right) \\
&=&-\frac{1}{h^2}\left(\moper{\gambund}{\frmbr{e}}\left([\omega_{ik} x^k\stackrel{*}{,}\omega_{jl} x^l]\right) +
\omega_{ij} \right),
\end{eqnarray*}
where relation (\ref{wdt_wds_lambda}) and equivalence $M$ of $*$ (star product in $C^{\infty}(\ebund)[[h]]$ generated by $\connbund$ and $\connsymp$) and $*_S$ have been used. With this formula one quickly computes
\begin{equation*}
\widehat{R}_{ij\,\frmbr{\widetilde{e}}} = \widehat{g} *_S \widehat{R}_{ij\,\frmbr{e}} *_S\widehat{g}^{-1}
\end{equation*}
since $\omega_{ij}$ is constant in Darboux coordinates.

To obtain some kind of Bianchi identities let us define
\begin{equation*}
\widehat{D}_i \widehat{R}_{jk} := X_i (\widehat{R}_{jk}) + [\widehat{\Gamma}_{i}\stackrel{*_S}{,}\widehat{R}_{jk}].
\end{equation*}
Then the formula
\begin{equation}
\label{sw_bianchi}
\widehat{D}_{[i} \widehat{R}_{jk]} \equiv 0
\end{equation}
holds as a purely algebraic consequence of $X_i X_j = X_j X_i$ and the Jacobi identity. Alternatively, one may calculate that
\begin{equation*}
\widehat{D}_i \widehat{R}_{jk\,\frmbr{e}} = -\frac{\rmi}{h^3} \moper{\gambund}{\frmbr{e}}\left([\omega_{il} x^l\stackrel{*}{,}[\omega_{jm} x^m \stackrel{*}{,} \omega_{kp} x^p]]\right)
\end{equation*}
and apply antisymmetrization to get (\ref{sw_bianchi}).

Explicit formulae for noncommutative gauge objects require calculation of explicit form of $\toper{\gambund}{0}{\frmbr{e}}$, $\uinvoper{\gambund}{0}{\frmbr{e}}$ and $\uoper{\gambund}{g^{-1}\rmd g}{\frmbr{e}}$. This is quite tedious but otherwise simple task, involving only recursive computations (formulae (\ref{fedo_abeliter}), (\ref{fedo_q_iter}), (\ref{triv_ebund_ham}), (\ref{triv_heis_int}) and (\ref{triv_U_def})) with known initial points for recurrences. One obtains (from now on we omit ${}_{\frmbr{e}}$ subscript)
\begin{eqnarray}
\nonumber
\toper{\gambund}{0}{}(Q(\omega_{ij}x^j))&=&
Q_S\bigg(
\omega_{ij}x^j - ih \gambund_i 
+\frac{h^2}{2} \omega^{sr} \int_{0}^{1}
\bigg\{
\dotghmtp{\gambund}{0}{}{s}(\tau) , 
\int_{0}^{\tau} \partial^{(\eta)}_{(i} \dotghmtp{\gambund}{0}{}{r)}(\eta)\\ 
\nonumber
& & + \left[\dotghmtp{\gambund}{0}{}{r}(\eta),\ghmtp{\gambund}{0}{}{i}(\eta) - \gambund_i\right]
\rmd \eta -\curvbund_{ri}(\tau)
\bigg\} \rmd \tau 
+ h^2 \mu_{5i}\\
\nonumber
& &+ \frac{h^2}{2} \tensor{\Gamma}{^{sr}_i} \int^{0}_{1} 
\left\{ \dotghmtp{\gambund}{0}{}{s}(\tau), \ghmtp{\gambund}{0}{}{r}(\tau) - \gambund_r \right\}
\rmd \tau \\
& &+\frac{h^2}{4} \omega^{sr} \left\{ \gambund_s , \curvbund_{ri}(0) \right\}
+O(h^3)\bigg)
\label{sw_triv_explicit}
\end{eqnarray}
and
\begin{eqnarray}
\nonumber
\widehat{\Gamma}_i (\gambund) &=& \gambund_{i} 
+\frac{\rmi h}{2} \omega^{sr} \int_{0}^{1}
\bigg\{
\dotghmtp{\gambund}{0}{}{s}(\tau) , 
\int_{0}^{\tau} \partial^{(\eta)}_{(i} \dotghmtp{\gambund}{0}{}{r)}(\eta) \\
\nonumber
& &+\left[\dotghmtp{\gambund}{0}{}{r}(\eta),\ghmtp{\gambund}{0}{}{i}(\eta) - \gambund_i\right]
\rmd \eta -\curvbund_{ri}(\tau) \bigg\} \rmd \tau 
+ \rmi h \mu_{5i}\\
\nonumber
& &+ \frac{\rmi h}{2} \tensor{\Gamma}{^{sr}_i} \int^{0}_{1} 
\left\{ \dotghmtp{\gambund}{0}{}{s}(\tau), \ghmtp{\gambund}{0}{}{r}(\tau) - \gambund_r \right\}
\rmd \tau \\
& &+\frac{\rmi h}{4} \omega^{sr} \left\{ \gambund_s , \curvbund_{ri}(0) \right\} 
+\frac{\rmi h}{48}\frac{\partial \Gamma_{jkl}}{\partial x^i}\Gamma^{jkl}+O(h^2),
\label{sw_gammgen_explicit}
\end{eqnarray}
where $\{\,\cdot,\cdot \,\}$ denotes anticommutator, $\curvbund(t)$ stands for the family of curvatures generated by  $\ghmtp{\gambund}{0}{\, }{} (t)$, $\partial^{(t)}_i A_j=\frac{\partial}{\partial x^i} A_j - \tensor{\Gamma}{^{k}_{ji}}A_k+[\ghmtp{\gambund}{0}{}{i}(t),A_j]$ and $\mu_{5i}$ comes from $\mu$ due to relation  $\mu= \cdots + h^2 \mu_{5i}y^i + \cdots$. (For sake of simplicity we have restricted ourselves to the case of $\deg m(t), \, \deg \mu \geq 4$ when arriving at above formula). Natural choice of homotopy $G(t)$ in the form $\ghmtp{\gambund}{0}{\, }{} (t)= f(t) \gambund$, where $f: [0,1] \to \mathbb{R}$, $f(0)=1$, $f(1)=0$, leads to
\begin{subequations}
\label{sw_expl}
\begin{eqnarray}
\nonumber
\widehat{\Gamma}_i (\gambund) &=& \gambund_{i} + \rmi h \left( \frac{1}{4} \omega^{jk} \left\{\gambund_j, \curvbund_{ki} + \frac{\partial \gambund_i}{\partial x^k} \right\} +  \mu_{5i}
+\frac{1}{48}\frac{\partial \Gamma_{jkl}}{\partial x^i}\Gamma^{jkl}
\right)\\
& &+O(h^2),\\
\widehat{g}(g,\gambund) &=& g+ \frac{\rmi h}{4} g \,\omega^{jk} \left( \frac{\partial g^{-1}}{\partial x^j} \frac{\partial g}{\partial x^k} + \left\{ \gambund_j, g^{-1} \frac{\partial g}{\partial x^k} \right\} \right) + O(h^2).
\end{eqnarray}
\end{subequations}
After switching to field theory conventions for gauge objects and to the most frequently used convention for Moyal product i.e. 
\begin{equation*}
\gambund = -iA, \qquad \curvbund = -i F, \qquad \widehat{\Gamma} = -i \widehat{A}, \qquad \tilde{h}=-h
\end{equation*}
relations (\ref{sw_expl}) may be rewritten as
\begin{subequations}
\begin{eqnarray}
\nonumber
\widehat{A}_i (A) &=& A_{i} + \tilde{h} \left( - \frac{1}{4} \omega^{jk} \left\{A_j, F_{ki} + \frac{\partial A_i}{\partial x^k} \right\} +  \mu_{5i}
+ \frac{1}{48}\frac{\partial \Gamma_{jkl}}{\partial x^i}\Gamma^{jkl}
\right)\\
& &+O(h^2),\\
\widehat{g}(g,A) &=& g - \frac{\rmi \tilde{h}}{4} g \,\omega^{jk} \left( \frac{\partial g^{-1}}{\partial x^j} \frac{\partial g}{\partial x^k} - \rmi \left\{ A_j, g^{-1} \frac{\partial g}{\partial x^k} \right\} \right) + O(h^2).
\end{eqnarray}
\end{subequations}
For infinitesimal version of gauge transformations ($g=e^{\rmi \chi}$, $\widehat{g}=e^{\rmi \widehat{\chi}}$ with small $\chi$ and $\widehat{\chi}$) it comes that
\begin{equation}
\widehat{\chi}(\chi,A)=\chi+\frac{\tilde{h}}{4}\omega^{jk}\left\{\frac{\partial \chi}{\partial x^j}, A_k\right\}+O(h^2). 
\end{equation}
Thus, we have arrived at formulae, which for $\Gamma_{ijk} \equiv 0$ and $\mu_{5i} \equiv 0$ reduce to the well known solutions of Seiberg-Witten equations \cite{seibwitt}.
\section{Final Comments}
We have shown how Seiberg-Witten equations can be derived within framework of equivalence theory of Fedosov star products. In our approach their solutions may be obtained up to arbitrary order of $h$ by precisely defined recurrences. Let us summarize all steps of this procedure. The starting point is given by the bundles $\bund$, $\ebund$ with connections $\connbund$, $\connend$ respectively, and the Fedosov deformation quantization of $\ebund$. We are looking for noncommutative version of $\gambund$. At the very beginning one must fix homotopies $\ghmtp{\gambund}{0}{\frmbr{e}}{}(t)$ for each possible frame $e$. Then, lifting $Q(\omega_{ij} x^j)$ should be calculated. (This involves computation of $r$ with formula (\ref{fedo_abeliter})). Next, the Hamiltonian $H(t)$ is determined from (\ref{triv_ebund_ham}). Both $Q(\omega_{ij} x^j)$ and $H(t)$ are put into equation (\ref{triv_heis_int}) for iterative calculation of $\toper{\gambund}{0}{\frmbr{e}}(Q(\omega_{ij} x^j))$. Finally we project the result back to $C^{\infty}(\ebund)[[h]]$ and use $\lambda_i$ (which is calculated in quite similar way) to obtain $\widehat{\Gamma}_i$ according to (\ref{sw_nonc_gamma}). For $\widehat{g}$ the equation 
(\ref{triv_U_def}) must be used twice. First for computation of $\uinvoper{\gambund}{0}{\frmbr{e}}$, and then for $\uoper{\gambund}{g^{-1}\rmd g}{\frmbr{e}}$. For the second iteration one must use Hamiltonian calculated for the homotopy in the frame $\widetilde{e}=eg^{-1}$. Then equation (\ref{triv_nonc_g}) gives $\widehat{g}$.

Let us also mention that we have achieved some kind of generalization of Seiberg-Witten equations from Moyal to Fedosov type of noncommutativity. Indeed, in covariance relation (\ref{sw_ncgamma_covar}) the star product $*_S$ is just the matrix product with multiplication of elements replaced by Fedosov star product of functions. Notice, that in order to keep well defined curvature $\widehat{R}$ we have to introduce some commuting derivations. This is done by importing derivations $X_i$ defined in \cite{dobrski}.

Ambiguities in solutions characteristic to Seiberg-Witten equations arise in our setting in two contexts. First, as a consequence of ambiguity in choice of homotopies $\ghmtp{\gambund}{0}{\frmbr{e}}{}(t)$.  This is explicitly manifested in formulae (\ref{sw_triv_explicit}) and (\ref{sw_gammgen_explicit}).
The second source of possible ambiguities comes from internal degrees of freedom of Fedosov deformation quantization. We have (to some extent) analyzed consequences of arbitrariness in choice of
$\mu$, but it should be mentioned, that this is not the only parameter which can be varied in Fedosov formalism. Indeed, one may also change curvature of Abelian connections, which is fixed in this paper to be $\Omega=-1/2\, \omega_{ij}\rmd x^i \wedge \rmd x^j$. As in the case of nonconstant $\mu$, equivalences between star products with different $\Omega$ would yield some extra terms in solutions (\ref{sw_expl}).

Presented construction is (as in the case of the usual Seiberg-Witten equations) local and dependent on the choice of particular Darboux coordinates. However, it has been shown that Seiberg-Witten correspondence can be understood as a local trivialization of the global structure -- Fedosov quantization of endomorphism bundle. Thus, it is natural to pass from the local description of noncommutative field theory (by means of Seiberg-Witten map and Moyal product) to the global one (expressed in the language of Fedosov $*$-product of endomorphisms). The case of noncommutative general relativity will be covered in this manner in the author's forthcoming paper.

\section*{Acknowledgments}
I would like to thank professor Maciej Przanowski for reviewing initial version of this paper and helpful remarks.


\begin{thebibliography}{99}
\bibitem{seibwitt} N. Seiberg and E. Witten, String theory and noncommutative geometry, \emph{J. High Energy Phys.} \textbf{09} (1999), 032, hep-th/9908142
\bibitem{asakawa1} T. Asakawa and  I. Kishimoto, Comments on gauge equivalence in noncommutative geometry, \emph{J. High Energy Phys.} \textbf{11} (1999), 024, hep-th/9909139
\bibitem{goto} S. Goto and H. Hata, Noncommutative monopole at the second order in $\theta$, \emph{Phys.~Rev.~D} \textbf{62} (2000), 085022, hep-th/0005101
\bibitem{jurco0} B. Jur\v{c}o, L. M\"{o}ller S. Schraml, P. Schupp and J. Wess, Construction of non-abelian gauge theories on noncommutative spaces, \emph {Eur. Phys. J. C} \textbf{21} (2001), 383, hep-th/0104153 
\bibitem{brace} D. M. Brace, B. L. Cerchiai, A. F. Pasqua, U. Varadarajan and B. Zumino, A cohomological approach to the non-abelian Seiberg-Witten map, \emph{J. High Energy Phys.} \textbf{06} (2001), 047, hep-th/0105192
\bibitem{kraus} P. Kraus and M. Shigemori, Non-commutative instantons and the Seiberg-Witten map, \emph{J. High Energy Phys.} \textbf{06} (2002), 034, hep-th/0110035
\bibitem{moller} L. M\"{o}ller, Second order expansion of action functionals of noncommutative gauge theories, \emph{J. High Energy Phys.} \textbf{10} (2004), 063, hep-th/0409085
\bibitem{ohl} A. Alboteanu, T. Ohl and R. R\"{u}ckl, Noncommutative standard model at $O(\theta^2)$, \emph{Phys. Rev. D} \textbf{76} (2007), 105018, arXiv:0707.3595
\bibitem{trampetic} J. Trampeti\'{c} and M. Wohlgenannt, Remarks on the second-order Seiberg-Witten maps, \emph{Phys. Rev. D} \textbf{76} (2007), 127703, arXiv:0710.2182
\bibitem{ulker} K. \"{U}lker and B. Yap\i{}\c{s}kan, Seiberg-Witten maps to all orders, \emph{Phys. Rev. D} \textbf{77} (2008), 065006, arXiv:0712.0506
\bibitem{jurco1} B. Jur\v{c}o and P. Schupp, Noncommutative Yang-Mills from equivalence of star products, \emph{Eur. Phys. J. C} \textbf{14} (2000), 367, hep-th/0001032
\bibitem{fedosov} B. V. Fedosov, \emph{Deformation quantization and index theory} (Akademie Verlag, Berlin 1996)
\bibitem{fedosov0} B.V. Fedosov, A simple geometrical construction of deformation quantization, \emph{J. Diff. Geom.} \textbf{40} (1994), 213 
\bibitem{jurco2} B. Jur\v{c}o, P. Schupp and J. Wess, Nonabelian noncommutative gauge theory via
noncommutative extra dimensions, \emph{Nucl. Phys. B} \textbf{604} (2001), 148, hep-th/0102129
\bibitem{dobrski} M. Dobrski, Explicit example of local differential calculus over Fedosov algebra, \emph{Acta Phys. Pol. B} \textbf{40} (2009), 1591, arXiv:0807.1199
\bibitem{emmrwein} C. Emmrich and A. Weinstein, The differential geometry of Fedosov's quantization, in \emph{Lie Theory and Geometry. In Honor of Bertram Kostant} eds. J.-L.~Brylinski, R. Brylinski, V. Guillemin and V. Kac (Birkh\"auser, 1994),  hep-th/9311094
\bibitem{tosiek} M. Gadella, M. A. del Olmo and  J. Tosiek, Geometrical origin of the *-product in the Fedosov formalism, \emph{J. Geom. Phys.} \textbf{55} (2005), 316, hep-th/0405157
\bibitem{tosiek2} M. Gadella, M. A. del Olmo, J. Tosiek, Quantization on a 2-dimensional phase space with a constant curvature tensor, \emph{Annals Phys.} \textbf{307} (2003), 272-307, hep-ph/0306117
\bibitem{gelfand} I. Gelfand, V. Retakh and M. Shubin, Fedosov Manifolds, \emph{Adv. Math.} \textbf{136} (1998), 104, dg-ga/9707024
\bibitem{bielgutt} P. Bieliavsky, M. Cahen, S. Gutt, J. Rawnsley and L. Schwachh\"{o}fer, Symplectic connections, \emph{Int. J. Geom. Meth. Mod. Phys.} \textbf{3} (2006), 375, math/0511194
\bibitem{schupp} P. Schupp, Noncommutative field theory, \emph{Fortschr. Phys.} \textbf{54} (2006), 165 - 174
\end{thebibliography}
\end{document}